\documentclass[journal]{IEEEtran}
 
\usepackage{amsmath,amssymb,amsfonts}
\usepackage[numbers,sort&compress]{natbib}
\usepackage{graphicx}
\usepackage[hidelinks]{hyperref}
\usepackage{amsthm}
\usepackage{siunitx}
\usepackage{tikz}
\usetikzlibrary{positioning, arrows.meta}
\usepackage{caption}
\usepackage{booktabs}
\usepackage{subcaption}
\usepackage{tabularx}
\usepackage{textcomp}
\usepackage{multirow}
\usepackage{mathtools}
\usepackage{enumitem}
\usepackage{float}
\usepackage{algorithm}
\usepackage{algorithmicx}
\usepackage{algpseudocode}
\usepackage{xurl}
 
\newtheorem{theorem}{Theorem}
\newtheorem{remark}{Remark}
\newtheorem{definition}{Definition}

\setlist[itemize]{leftmargin=*,topsep=2pt,itemsep=1pt,parsep=0pt}

\begin{document}
\raggedbottom

\title{Reusing Spare Vehicle Computing Capacity:\\
Is It Viable, Profitable and Sustainable?}

\author{Rosario~Patan\`e,
        Nadjib~Achir,
        Andrea~Araldo,
        and~Lila~Boukhatem
\thanks{R.~Patan\`e and A.~Araldo are with SAMOVAR, T\'el\'ecom SudParis,
Institut Polytechnique de Paris, Palaiseau, France
(e-mail: rosario.patane@telecom-sudparis.eu;
andrea.araldo@telecom-sudparis.eu).}
\thanks{N.~Achir is with INRIA Saclay, Palaiseau, France
(e-mail: nadjib.achir@inria.fr).}
\thanks{L.~Boukhatem is with Université Paris-Saclay, CNRS, Laboratoire Interdisciplinaire des Sciences du
Numérique, 91400, Orsay, France (e-mail: lila.boukhatem@universite-paris-saclay.fr).}}

\maketitle

\begin{abstract}
Vehicular Cloud Computing (VCC) exploits computing hardware already
embedded in vehicles for purposes unrelated to offloading and puts its
idle cycles to work executing end-users' offloaded tasks, avoiding the
deployment of new computation infrastructure. Despite its conceptual
appeal, adoption is hindered by the lack of quantitative evidence that
sharing spare vehicular capacity is \emph{viable}, \emph{profitable} and
\emph{sustainable}.  A management scheme for
task offloading sustains deadline-constrained execution by decoupling
deadline-aware allocation from ex-post incentive design, making the
approach \emph{viable}. A robust coalitional game and a vehicle-selection
mechanism keep the network operator and the vehicle owners
\emph{profitable} even under unfavorable outcomes of the random
environment, e.g., mobility, radio conditions, or vehicles over-stating
their spare capacity. A simulation campaign on realistic urban mobility,
using Simulation of Urban MObility (SUMO) traces of Rome, together with a
CO\textsubscript{2} life-cycle assessment against edge computing
infrastructure,  whose footprint, is attributed pro rata to the share of
its capacity devoted to offloading shows the approach to be
\emph{sustainable}. Vehicles absorb most of the offloaded traffic within
tens of milliseconds; participation yields up to 157~km of monthly
driving range per vehicle; and life-cycle emissions drop by over 99\% when using VCC
compared to that edge infrastructure.
\end{abstract}

\begin{IEEEkeywords}
Distributed Computing, Vehicular Cloud, Cooperative Offloading, Incentive
Mechanisms, Coalitional Game Theory, Carbon Impact Analysis, Frugal
Computing.
\end{IEEEkeywords}

\section{Introduction}
\label{sec:intro}

The paradigm of \emph{ubiquitous computing}, where computational
capabilities are embedded in everyday objects and environments, is
increasingly becoming tangible. As demand grows for low-latency and
compute-intensive services, a key challenge arises: satisfying
performance constraints while minimizing the environmental footprint of
digital infrastructures. This dual goal has driven the emergence of
\emph{sustainable computing} strategies aimed at reducing energy
consumption and CO\textsubscript{2} emissions in distributed
communication systems.

Vehicular Cloud Computing~(VCC), whose underutilized resources enable
scalable, responsive, and sustainable task offloading, is especially
attractive in dense urban areas, where their abundance allows leveraging
mobile computing without extra computing and communication
infrastructure. By \emph{spare capacity} we mean the fraction of a vehicle's on-board computing capacity, already installed for driving-related functions, that the vehicle offers for offloading, so that no new hardware is deployed.

Despite this potential, the practical viability of VCC remains unclear in
terms of consumption, profitability and emissions.

This paper treats the spare computing capacity of vehicles as an
infrastructure that already exists, and asks what it costs and what it
returns to use it. Every tool used here is standard. Their combination
makes three things measurable: a queueing model of every executor turns
an over-stated capacity into a deadline miss that the Controller can
observe; an admission rule filters such executors before allocation; a
coalitional game settles payments on realized outcomes. Delay, energy
and money are measured together on the same traces, so incentives can be
stated in units a vehicle owner understands and the environmental claim
can be tested against an edge server. The paper is organized around
three questions, and the contributions answer them in turn:
\begin{itemize}
  \item \textbf{Is sharing spare capacity viable?}
  A two-stage scheme decouples deadline-constrained allocation, solved
  within a $5$\,ms slot on nominal estimates, from revenue sharing,
  settled off the critical path on realized values. The waiting time an
  executor already carries is part of the delay budget, so
  over-commitment is visible at decision time.
  \item \textbf{Is it profitable?}
  Payments are shared through a coalitional game whose core is non-empty
  at every slot, and an admission rule priced on the gap between declared
  and delivered capacity protects the deadline guarantee when vehicles
  over-declare. We report where the rule pays and where it costs money,
  including the regime in which it does not help and an adaptive variant
  that the data reject, and we express revenues in monthly driving range
  as well as in currency.
  \item \textbf{Is it sustainable?}
  On SUMO traces of Rome we measure per-task energy on the vehicular and
  cloud paths and compare the life-cycle emissions of the fleet with
  those of an edge server, attributing the server footprint pro rata to
  the peak share that offloading occupies on it.
\end{itemize}

The remainder of the paper is organized as follows.
Section~\ref{sec:related_work_last} reviews the related work.
Section~\ref{sec:architecture_last} presents the system architecture.
Section~\ref{sec:modelss} defines the delay, energy, and utility models.
Section~\ref{sec:strategy_allocation} and
Section~\ref{sec:revenue_sharing} detail the task allocation and
revenue-sharing mechanisms. Section~\ref{sec:simulation_last} outlines
the simulation setup. Section~\ref{sec: results-last}
and~\ref{sec:emissions_analysis} report performance, energy, incentive
and emissions results. Section~\ref{sec: conclusion-last} concludes the
paper. The implementation and simulation scripts are available in a
publicly available repository~\cite{patane2026github}.

\section{Related Work}
\label{sec:related_work_last}

\subsection{Task Allocation in Vehicular Systems}
\label{sec:rw1}
Early studies such as~\cite{KumarEnergyVM} revealed energy-performance
trade-offs in mobile offloading. Later, location and delay aware task
allocation strategies were proposed~\cite{XiaDelayMin,Ning2022TMC,Hekmati2020TMC}, while Liu et
al.~\cite{chen23} explored mobility-aware multi-hop task distribution.
Hybrid models were also proposed including joint
communication-computation~\cite{Zhang2022} and mean-field
game-theoretic decision making~\cite{cui22}. 

The open issues listed in~\cite{Ahmed2022Survey}, mobility-induced
volatility of the executor set, heterogeneity of on-board resources and
energy constraints, are the ones that a scheme for real vehicular
settings has to address jointly. Our model explicitly
integrates such constraints into a cooperative optimization strategy.

\subsection{Cooperative Offloading under Real-Time Constraints}
\label{sec:rw2}
AI-based approaches, including deep reinforcement learning~\cite{sun19}
and other adaptive optimization techniques~\cite{liu23}, have been
proposed for vehicular offloading due to their flexibility in dynamic
environments. Cooperative V2V schemes~\cite{chen23} and decentralized VCC
solutions~\cite{wan2020efficient} further improve resource utilization.
However, these methods typically rely on iterative learning, model
training, or extensive signaling, making them difficult to reconcile with
hard real-time and deadline-constrained offloading.

Recent learning-based schemes, such as topology-adaptive offloading using
graph reinforcement learning~\cite{Tang2024TopoOffloading}, improve
average latency but do not provide guarantees on deadline satisfaction or
computational boundedness. As a result, their practical deployability in
latency-critical vehicular scenarios remains limited.

In contrast, our approach adopts lightweight, mathematically grounded
allocation mechanisms with bounded complexity, explicitly designed to
operate within millisecond-scale decision windows while ensuring deadline
compliance.

\subsection{Game-Theoretic Cooperation}
\label{sec:rw3}
Coalitional game theory has been adopted to model economic incentives
among vehicles and stakeholders~\cite{ZhouCoalition,Chattopadhyay2022TMC}, often relying on
online coalition evaluation methods. However, their high computational
cost and reliance on in-loop coalition valuation under dynamic conditions
hinder real-time applicability and practical
deployment~\cite{LeonCalvoPlatoon}. A recent
survey~\cite{Pirhosseini2025GT} further highlights that existing
game-theoretic approaches for Edge/Fog/Cloud offloading largely remain
abstract and lack integrated models jointly accounting for delay, energy,
and profitability. Our work addresses these limitations by decoupling
real-time allocation from incentive computation and performing
cooperative payoff evaluation ex post, enabling lightweight,
delay-aware, and deployability-oriented incentive mechanisms.

\subsection{Robustness to Declared Capacity}
\label{sec:rw4}
The works above take the capacity a vehicle declares in its beacons at
face value, so an executor that over-states its resources is allocated
tasks it cannot complete. Distributionally robust optimization over
Wasserstein ambiguity sets~\cite{Esfahani2018Wasserstein,Kuhn2019Tutorial}
provides admission rules in closed form whose margin grows with the
observed deviation between declared and realized quantities; we adopt
this tool in Sec.~\ref{sec:admission} to filter such executors before
allocation.

\subsection{Positioning of This Work}
This work proposes a cooperative offloading framework that jointly
accounts for latency, energy consumption, and monetary cost. Compared to prior works, we explicitly model stakeholder heterogeneity,
i.e., Network Operator~(NO), Vehicle Owners~(VOs) and Cloud
Computing~(CC), incorporate real-time constraints, and validate
performance using reproducible simulations grounded in urban vehicular
mobility and energy models.

\section{Architecture}
\label{sec:architecture_last}

\begin{figure*}[t]
\centering
\begin{subfigure}[b]{0.40\textwidth}
    \centering
    \includegraphics[width=\linewidth]{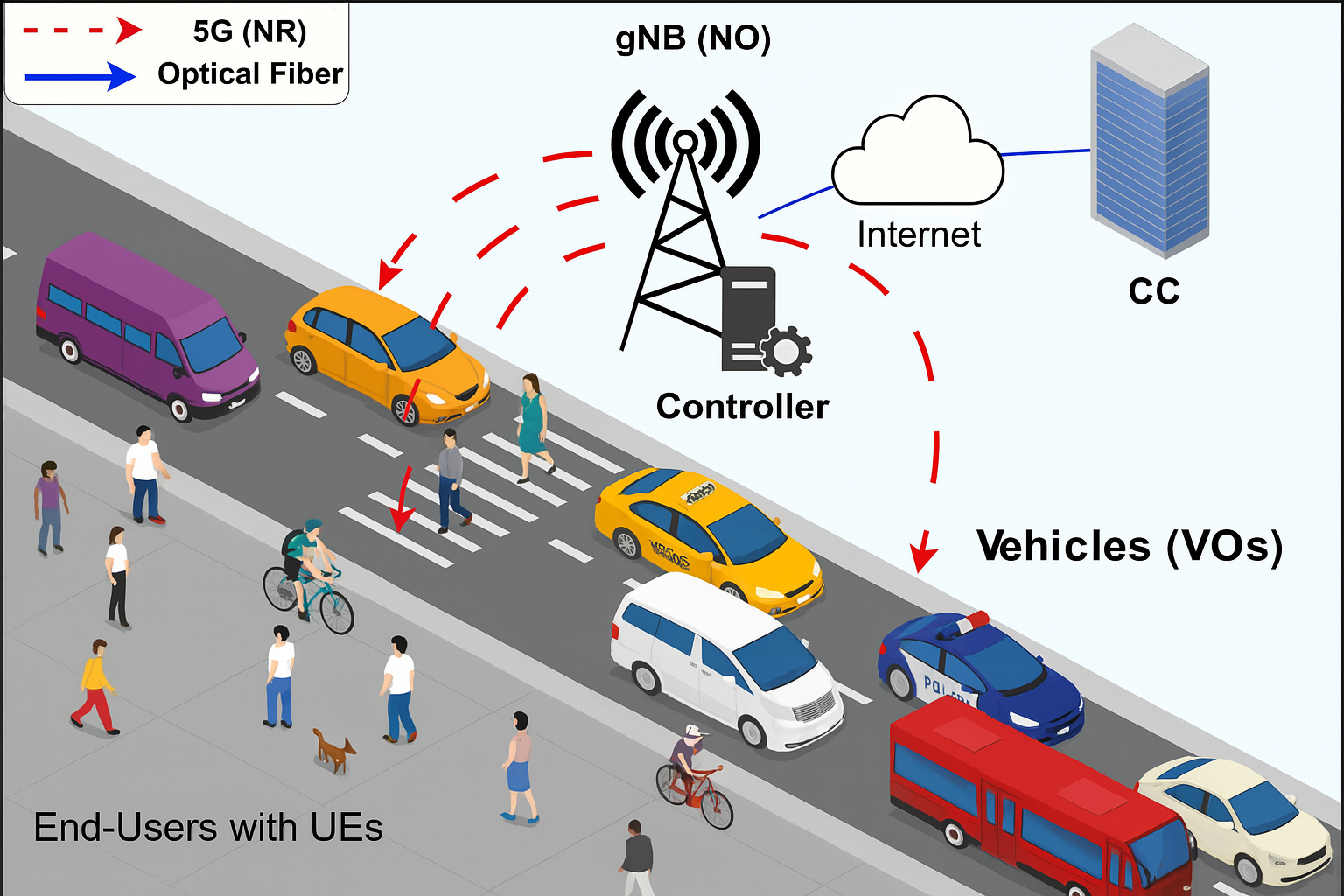}
    \caption{5G-based architecture with centralized control.}
    \label{fig:CoE_MONEY}
\end{subfigure}\hfill
\begin{subfigure}[b]{0.45\textwidth}
\centering
\resizebox{\linewidth}{!}{%
\begin{tikzpicture}[
  stage/.style={rectangle, draw=black!80, fill=blue!10, thick,
                minimum width=3.0cm, minimum height=0.9cm,
                align=center, font=\small},
  process/.style={rectangle, draw=black!60, fill=gray!5,
                  minimum width=2.7cm, minimum height=0.8cm,
                  align=center, font=\footnotesize},
  gamebox/.style={rectangle, draw=black!70, fill=blue!5, thick,
                  minimum width=2.7cm, minimum height=0.8cm,
                  align=center, font=\footnotesize},
  arrow/.style={->, >=stealth, thick}
]
\node[stage] (s1) at (0,0) {\textbf{Stage~1}\\Real-time allocation};
\node[process, below=0.25cm of s1, text width=2.8cm] (rank) {Ranking + robust admission};
\node[process, below=0.25cm of rank, text width=2.8cm] (ilp) {ILP allocation\\(deadline-constrained)};
\node[stage, right=2.6cm of s1] (s2) {\textbf{Stage~2}\\Ex-post sharing};
\node[gamebox, below=0.25cm of s2, text width=2.8cm] (outer) {\textbf{Outer game}\\NO $\tfrac{1}{2}$ $/$ block $\tfrac{1}{2}$};
\node[gamebox, below=0.25cm of outer, text width=2.8cm] (inner) {\textbf{Inner game}\\execs $\propto w_i^\omega$};
\node[process, below=0.25cm of inner, text width=2.8cm] (core) {Core check + LP correction};
\draw[arrow] (s1) -- (rank);
\draw[arrow] (rank) -- (ilp);
\draw[arrow] (s2) -- (outer);
\draw[arrow] (outer) -- (inner) node[midway, right, font=\scriptsize] {$\tfrac{1}{2}\Pi^\omega_\tau$};
\draw[arrow] (inner) -- (core);
\draw[arrow] (ilp.east) -- ++(0.4,0) |- (outer.west)
  node[pos=0.25, right, font=\scriptsize, align=left] {realized\\$T^{i,\omega}_\tau,\,c^i_\tau$};
\draw[arrow, dashed] (s2.north) -- ++(0,0.45) -| (s1.north)
  node[pos=0.25, above, font=\scriptsize] {slot $z\!+\!1$};
\end{tikzpicture}}
\caption{Two-stage scheme with the nested revenue-sharing game.}
\label{fig:two_stage_nested}
\end{subfigure}
\caption{\textbf{(a)} Architecture coordinating vehicular and cloud
resources. \textbf{(b)} Stage~1 allocates on the critical path, Stage~2
settles incentives off it; the stages are pipelined.}
\label{fig:arch_overview}
\end{figure*}

Figure~\ref{fig:CoE_MONEY} illustrates the proposed 5G-based architecture
for vehicular task offloading. Traditional Vehicular Ad-Hoc
Networks~(VANET) face scalability and flexibility issues, limiting
resource efficiency. Centralized solutions using Software-Defined
Networking~(SDN) and Time-Sensitive Networking~(TSN) improve real-time
communication, resource management, and service-oriented control in
vehicular systems~\cite{LUO2025}. For these reasons the proposed
architecture is centralized. The system comprises:
\begin{itemize}
    \item \textbf{End-users:} equipped with mobile 5G User Equipments~(UEs)
    devices issuing offload requests;
    \item \textbf{gNodeB:} 5G base station forwarding tasks to the
    Controller, owned by the \textbf{Network Operator~(NO)};
    \item \textbf{Controller:} co-located with gNodeB, manages beacons,
    task allocation (Sec.~\ref{sec:strategy_allocation}) and revenue
    sharing, owned by the \textbf{NO};
    \item \textbf{Vehicles:} compute-capable 5G nodes broadcasting
    periodic status, owned by the \textbf{Vehicle Owners~(VOs)}, equipped
    with 5G UEs;
    \item \textbf{Cloud:} backup compute node or cluster with consistent
    availability, large computational power, and persistent network
    access, denoted as \textbf{Cloud Computing~(CC)}.
\end{itemize}

\subsection{System Workflow and Interactions}
\label{sec:beacon_task_alloc}

We assume that end-users~(i.e., the UEs) have already decided to offload
to save their battery or to obtain a better Quality of Service~(QoS),
e.g., via mechanisms in~\cite{OffloadTask}.

The \emph{Controller} operates in sequential phases: it begins by
collecting beacons from execution nodes, which include vehicles and, less
frequently (order of minutes), the cloud. Vehicles follow the ETSI CAM
standard~\cite{etsi-cam-spec}, sending beacons at a frequency of
10\,Hz containing their current speed and position. For the same standard
a vehicle becomes unavailable if no beacon is received within
500\,ms~\cite{etsi-cam-spec}.

Vehicles also transmit \emph{aperiodic} beacons upon task completion to
update the controller about their resource availability over time. Each
beacon contains: (i) transmission timestamp, (ii) available computational
resources, (iii) the number of accepted tasks, (iv) processor energy
consumption and (v) radio interface, (vi) cost per kilowatt-hour,
(vii) remaining energy in the battery dedicated for offloading,
(viii) position and speed. Upon receiving these beacons, the
\emph{Controller} aggregates the information and ranks the candidate
nodes based on their \emph{expected dwell time, energy availability}, and
\emph{computational capacity}. The cloud node has infinite dwell time due
to its static nature. Tasks are then assigned to the highest-ranked nodes
that meet the corresponding task deadlines. Once executed, results are
sent from the execution nodes back to the \emph{Controller}, which
forwards them to the end-user.

The system operates in the two stages of Fig.~\ref{fig:two_stage_nested}.
In Stage~1, at every $5$\,ms slot $z$, the Controller ingests the beacons
due in the slot, builds the candidate set as the cloud plus the top-$B$
ranked vehicles, applies the admission rule of Sec.~\ref{sec:admission},
solves the deadline-constrained allocation of
Sec.~\ref{sec:strategy_allocation} on nominal estimates, and dispatches
each task. In Stage~2, once a task completes, the executor reports the
realized completion time and energy, and the Controller recomputes the
per-stakeholder costs on these values, applies the sharing rule of
Sec.~\ref{sec:revenue_sharing} and verifies the core constraints. The
two stages are pipelined: Stage~2 settles the payments of the slots
already served and therefore lies outside the critical path of task
execution, so its running time does not enter the delay budget of any
task.

\section{System Model}
\label{sec:modelss}

\begin{table*}[t]
  \centering\scriptsize\setlength{\tabcolsep}{4pt}\renewcommand{\arraystretch}{1.08}
  \caption{Notation. A hat ($\hat{\cdot}$) marks a nominal quantity, estimated at decision time; a superscript $\omega$ marks its realization.}
  \label{tab:notation}
  \begin{tabular}{@{}l p{6.2cm} l l@{}}
    \toprule
    \textbf{Symbol} & \textbf{Meaning} & \textbf{Unit} & \textbf{Value / where defined} \\
    \midrule
    \multicolumn{4}{@{}l}{\emph{Tasks, slots and sets}}\\
    $z$; $\mathcal{T}_z$ & time-slot; tasks arrived in slot $z$ & --; -- & slot $5$\,ms (Table~\ref{tab:parameters_sim_last}) \\
    $\tau=(I_\tau,O_\tau,W_\tau,D_\tau)$ & task: input bits, output bits, workload, deadline & bit, bit, OP, s & Table~\ref{tab:parameters_sim_last} \\
    $p_\tau=p(D_\tau,W_\tau)$ & end-user payment for task $\tau$ & \$ & $\{2.63,1.43,1.03\}\,\mu\$$ \\
    $\mathcal{V}_z$; $M_{\text{veh}}$ & vehicles in coverage; their number & --; -- & Sec.~\ref{sec:modelss} \\
    $B$ & vehicles admitted to the candidate set & -- & $B=\min(|\mathcal{T}_z|,M_{\text{veh}})$ \\
    $\mathcal{E}_z$; $\mathcal{A}_z$ & candidate executors; executors selected by $\boldsymbol{y}^*_z$ & --; -- & Secs.~\ref{sec:modelss},~\ref{sec:revenue_sharing} \\
    $\mathcal{K}_z$; $\mathcal{N}_z$; $\mathcal{S}$ & stakeholders incl.\ NO; players $\mathcal{A}_z\cup\{\mathrm{NO}\}$; a coalition & --; --; -- & Sec.~\ref{sec:revenue_sharing} \\
    $\mathcal{S}_z$; $\mathcal{T}_z^{\mathcal{S}}$ & players executing $\tau$; tasks served inside $\mathcal{S}$ & --; -- & Eq.~\eqref{eq:allocated_task_set} \\
    $q^{i}$ & per-slot quota of executor $i$ & task & vehicles $1$, cloud unbounded \\
    \midrule
    \multicolumn{4}{@{}l}{\emph{Delay model, Eq.~\eqref{eq:T_unified}}}\\
    $T^{n}_\tau$; $\hat T^{n}_\tau$; $T^{i,\omega}_\tau$ & offloading time on $n$: realized, nominal, at outcome $\omega$ & s & Eq.~\eqref{eq:T_unified} \\
    $\overline{T}$ & mean offloading time over served tasks and seeds & s & Fig.~\ref{fig: rate_off_time} \\
    $R^{\text{5G}}_{\text{UL}},R^{\text{5G}}_{\text{DL}}$; $R^{n}_{\text{UL}},R^{n}_{\text{DL}}$ & UE--gNB rates; gNB--executor rates & bit/s & SINR model, Sec.~\ref{sec:networking} \\
    $d_{\text{UE-NO}}$; $d^{n}$; $d_{\text{cloud}}$ & UE--gNB distance; gNB--executor distance; fiber length to CC & m & $d_{\text{cloud}}=10$\,km \\
    $c_{\text{air}}$; $c_{\text{fib}}$; $c^{n}$ & signal speed in air; in fiber; on the leg to $n$ & m/s & $3\times10^{8}$; $(2/3)c_{\text{air}}$ \\
    $\delta^{n}_{\text{inet}}$; $R_{\text{inet}}$ & one-way core-network latency; backbone rate (cloud only) & s; bit/s & $35$\,ms; $100$\,Gb/s \\
    $C^{n}$ & compute capacity offered by $n$ & OP/s & spare fraction of $3\times10^{14}$ \\
    $T^{n}_{\text{queue}}$ & mean M/G/1 waiting time on $n$ & s & Eq.~\eqref{eq:pk} \\
    $\lambda_n$; $\mu_n$; $\rho_n$ & arrival rate to $n$; service rate; utilization & task/s; task/s; -- & Eq.~\eqref{eq:pk} \\
    $\mathrm{CV}^{2}$ & squared coefficient of variation of the service time & -- & $3.80$ \\
    $T_{\text{alg}}$ & Controller scheduling time, same for every executor & s & measured \\
    \midrule
    \multicolumn{4}{@{}l}{\emph{Energy and cost, Eqs.~\eqref{eq:energy_last_NO}--\eqref{eq: utility_last}}}\\
    $e^{\text{NO}}_\tau$; $e^{n}_\tau$ & energy of the NO; of executor $n$, per task & J & Eqs.~\eqref{eq:energy_last_NO},~\eqref{eq: energy_node_last} \\
    $P^{\text{NO}}_{\text{tx}}$; $P^{\text{NO}}_{\text{CPU}}$; $P^{n}_{\text{tx}}$ & gNB transmit power; Controller CPU power; executor transmit power & W & $23$\,dBm; $200$\,W; $23$\,dBm \\
    $\zeta^{n}$ & energy per operation of $n$ & J/OP & $4.35$ / $4.24\times10^{-13}$ (veh./cloud) \\
        $P_{\text{circ}}$; $\eta_{\text{PA}}$ & EARTH circuit power; power-amplifier efficiency (vehicle radio link) & W; -- & \cite{Auer2011} \\
    $\xi_{\text{inet}}$ & Internet transmission intensity per bit (cloud link) & J/bit & $2.7\times10^{-6}$~\cite{Aslan2018} \\\\
    $\chi^{i}$ & energy-to-cash rate of stakeholder $i$ & \$/J & $0.21$\,\$/kWh \\
    $c^{i}_\tau$; $c^{\text{off},i}_\tau$ & cost of $i$ for $\tau$; executor-plus-NO cost & \$ & Eq.~\eqref{eq: utility_last} \\
    $\boldsymbol{\theta}$; $\hat{\boldsymbol{\theta}}$; $\omega\in\Omega$ & random parameter vector; its nominal value; an outcome & --; --; -- & Sec.~\ref{sec:energy_models} \\
    \midrule
    \multicolumn{4}{@{}l}{\emph{Allocation and admission, Secs.~\ref{sec:strategy_allocation}--\ref{sec:admission}}}\\
    $y^{i}_{\tau,z}$; $\boldsymbol{y}^{*}_z$ & assignment of $\tau$ to $i$; optimal allocation & $\{0,1\}$ & Eq.~\eqref{eq: optimization_1} \\
    $s^{i}_\tau$ & normalized deadline slack of the pair & -- & Eq.~\eqref{eq:slack} \\
    $\Gamma$ & objective scaling (argmax-invariant) & -- & $10^{6}$ \\
    $\mathbf{1}_{i\tau}$; $\mathbf{1}^{\omega}_{i\tau}$ & deadline-success indicator: nominal; realized & $\{0,1\}$ & Sec.~\ref{sec:revenue_sharing} \\
    $C^{i}_{\text{decl}}$; $\bar C^{i}_{\text{real}}$ & declared capacity; mean realized capacity over history & OP/s & beacons \\
    $d^{i}_W$ & relative over-declaration of $i$ (Wasserstein gap) & -- & Eq.~\eqref{eq:dW} \\
    $\lambda_W$ & price of one unit of $d^{i}_W$, as a fraction of $p_\tau$ & -- & $5.0$ \\
    $\alpha$; $\kappa(\alpha)$ & CVaR level; Gaussian CVaR coefficient $\phi(\Phi^{-1}(\alpha))/(1-\alpha)$ & --; -- & $0.90$; $1.7549$ \\
    $\phi$; $\Phi$ & standard Gaussian density; distribution function & --; -- & -- \\
    $\epsilon$ & Wasserstein radius on the cost law & \$ & $0$ (misreporting isolated) \\
    $\hat c^{\text{off},i}_\tau$; $\hat\sigma^{\text{off},i}_\tau$ & nominal mean and st.\ dev.\ of the offloading cost & \$ & $\hat\sigma=0.1\,\hat c$ \\
    \midrule
    \multicolumn{4}{@{}l}{\emph{Revenue sharing, Sec.~\ref{sec:revenue_sharing}}}\\
    $v_z$; $v^{\omega}_z(\mathcal{S})$ & coalition value function; its realization at $\omega$ & \$ & Eqs.~\eqref{eq:value_function_NO_alone}--\eqref{eq:value_function_NO_in} \\
    $\boldsymbol{x}=(x^{i})$ & payoff allocation (imputation when in the core) & \$ & Sec.~\ref{sec:revenue_sharing} \\
    $\beta^{i}$ & share of $p_\tau$ assigned to player $i$, $\sum_i\beta^i=1$ & -- & induced by the sharing rule \\
    $\Pi^{\omega}_\tau$ & realized payoff of task $\tau$ & \$ & $p_\tau-\sum_i c^{i}_\tau(\boldsymbol{\theta}^\omega)$ \\
    $\pi^{\omega}_{\mathrm{NO}}(\tau)$; $\pi^{\omega}_i(\tau)$ & payoff of the NO; of executor $i$ & \$ & Eqs.~\eqref{eq:pi_NO},~\eqref{eq:pi_i_cost} \\
    $w^{\omega}_i$ & cost-proportional weight of executor $i$ & -- & Eq.~\eqref{eq:pi_i_cost} \\
    $G^{\omega}_{z,\boldsymbol\beta}$; $\mathcal{C}$; $\gamma_{\mathcal{S}}$ & TU game; set of coalitions; balanced-collection weights & --; --; -- & Theorem~\ref{thm:core_nonempty_BS_beta} \\
    \midrule
    \multicolumn{4}{@{}l}{\emph{Evaluation, Secs.~\ref{sec:simulation_last}--\ref{sec:emissions_analysis}}}\\
    $\psi$ & fraction of vehicles over-declaring their spare capacity & -- & $\{0,0.2,0.4,0.6,0.8\}$, intensity $0.6$ \\
    $f_{\text{peak}}$ & peak compute share of offloading on the edge server & -- & measured, Sec.~\ref{sec:peak} \\
    \bottomrule
  \end{tabular}
\end{table*}

In this section, we formalize the \emph{delay} and \emph{energy
consumption} models underlying the task allocation process;
Table~\ref{tab:notation} collects the notation used throughout the
paper. Then, the
interactions among the stakeholders~(Network Operator, Cloud Computing,
Vehicle Owners)$=$(NO, CC, VOs) are formalized through a cost model.
At each time-slot~$z$, if tasks~$\mathcal{T}_z$ arrived from end-users,
the NO considers set~$\mathcal{E}_z$ of nodes as potential executors. The
cloud is always in~$\mathcal{E}_z$. Then the~$B$ vehicles highest in the
ranking (see Sec.~\ref{sec:beacon_task_alloc}) are inserted
in~$\mathcal{E}_z$, where~$B=\min(|\mathcal{T}_z|,M_\text{veh})$
and~$M_\text{veh}$ is the number of vehicles currently in coverage and
available for offloading.

\subsection{Offloading Time}
\label{sec:time_models}

A task is modeled as a tuple $\tau = (I_\tau, O_\tau, W_\tau, D_\tau)$,
where $I_\tau$ and $O_\tau$ denote the input and output, including data
and code, measured in~[bits]. The term $W_\tau$ is the computational
workload measured in operations~[OP], and $D_\tau$ is the execution
deadline, in [seconds]. The set of tasks to be offloaded at time-slot $z$
is denoted as $\mathcal{T}_z$. A task can be executed by an executor node
$n$, which can be either the CC or a vehicle $v \in \mathcal{V}_z$, where
$\mathcal{V}_z$ is the set of offloading-capable vehicles at time slot
$z$. All parameter values used in the evaluation are reported and
justified in Table~\ref{tab:parameters_sim_last}. The offloading time for
task $\tau$ executed by $n$ is:

\begin{align}
T^{n}_{\!\tau} =\; &
\underbrace{\frac{I_\tau}{R_{\text{UL}}^{\text{5G}}}
+ \frac{O_\tau}{R_{\text{DL}}^{\text{5G}}}}_{\text{UE-NO transmission}}
+ \underbrace{2\,\frac{d_{\text{UE-NO}}}{c_{\text{air}}}}_{\text{Radio propagation}}
\notag\\[4pt]
&+ \underbrace{\frac{I_\tau}{R_{\text{UL}}^n}
+ \frac{O_\tau}{R_{\text{DL}}^n}}_{\text{NO-}n\text{ transmission}}
+ \underbrace{2\,\frac{d^{n}}{c^n}
+ 2\,\delta_{\text{inet}}^{n}}_{\text{Network delay}}
\notag\\[4pt]
&+ \underbrace{\frac{W_\tau}{C^{n}}}_{\text{Computation}}
+ \underbrace{T^{n}_{\text{queue}}}_{\text{Queueing}}
+ \underbrace{T_{\text{alg}}}_{\text{Allocation decision}}.
\label{eq:T_unified}
\end{align}

The offloading time $T^{n}_{\tau}$ in~\eqref{eq:T_unified} includes the
end-user to NO (UE-NO) transmission time, given by the uplink and
downlink 5G rates $R_{\text{UL}}^{\text{5G}}$ and
$R_{\text{DL}}^{\text{5G}}$, followed by the radio propagation delay over
distance $d_{\text{UE-NO}}$ at the speed of light in air
$c_{\text{air}}$. The data transfer between NO and executor node $n$
(vehicle or cloud) is characterized by rates $R_{\text{UL}}^n$,
$R_{\text{DL}}^n$, and a propagation delay over distance $d^n$ at speed
$c^n$. If $n$ is a vehicle, $c^n = c_{\text{air}}$; if $n=\text{CC}$,
$c^n= c_{\text{fib}}$ denotes the speed of light in optical fiber. The
term $2\,\delta_{\text{inet}}^{n}$ accounts for the core-network latency
of the Internet leg, which is null for vehicles as no Internet routing
occurs. The computation time is $W_\tau / C^n$, $T^n_{\text{queue}}$ is
the waiting time before service, and $T_{\text{alg}}$ the measured
Controller scheduling time, identical for every executor.

An executor with pending tasks cannot serve $\tau$ immediately. We model
every executor as a single-server queue and take $T^{n}_{\text{queue}}$
as the Pollaczek--Khinchine mean waiting time of the M/G/1
queue~\cite{Kleinrock1975,Miao2023TMC},
\begin{equation}
  T^{n}_{\text{queue}}
  =
  \frac{1+\mathrm{CV}^{2}}{2}\,\frac{\rho_n}{\mu_n-\lambda_n},
  \quad
  \mu_n=\frac{C^{n}}{\mathbb{E}[W_\tau]},
  \quad
  \rho_n=\frac{\lambda_n}{\mu_n},
  \label{eq:pk}
\end{equation}
where $\mu_n$ is the service rate, $\lambda_n$ the arrival rate routed to
$n$, tracked online by an exponentially weighted moving average, $\rho_n$
the utilization, and $\mathrm{CV}^{2}\approx 3.8$ the squared coefficient
of variation of the service time, inherited from the workload law of
Table~\ref{tab:parameters_sim_last}. An idle node contributes
$T^{n}_{\text{queue}}=0$; a saturated one contributes an unbounded wait,
so its tasks miss the deadline. Because $\mu_n$ depends on the declared
$C^{n}$, the queue turns a capacity misreport into an observable deadline
violation, which Sec.~\ref{sec:robustness-results} quantifies.

The formula is exact in expectation and, by PASTA~\cite{Wolff1982},
Poisson arrivals sample the time averages; deviations above the mean are
filtered ex ante by the admission rule of Sec.~\ref{sec:admission} and
penalized ex post by the deadline indicator of
Sec.~\ref{sec:revenue_sharing}. A worst-case bound is not used because
the M/G/1 waiting time is \textbf{unbounded}, hence uninformative for admission.

\subsection{Energy and Cost Model}
\label{sec:energy_models}

For energy modeling, we exclude the UE uplink, since user equipment
energy consumption is exogenous to the system and cannot be controlled or
optimized by it. Consequently, the analysis focuses on infrastructure-side
energy consumption. We define energy consumption per task $\tau$ for each
architecture stakeholder as follows.

\paragraph{Network Operator}
The NO's energy consumption includes transmission to executor node $n$
(via 5G wireless link if $n$ is a vehicle, or through the Internet if
CC), execution of the task allocation algorithm, and 5G downlink
transmission from the gNB to the UE.

\begin{equation}
\label{eq:energy_last_NO}
e^{\text{NO}}_\tau =
\underbrace{
P_{\text{tx}}^{\text{NO}} \left(
\dfrac{I_\tau}{R_{\text{UL}}^n}
+
\dfrac{O_\tau}{R_{\text{DL}}^{\text{5G}}}
\right)
}_{\text{Transmission}}
+
\underbrace{
P^{\text{NO}}_{\text{CPU}}\,T_{\text{alg}}
}_{\text{Decision}} \quad [J]
\end{equation}

where $P_{\text{tx}}^{\text{NO}}$ is the NO's gNB transmit power and
$P^{\text{NO}}_{\text{CPU}}$ is the \emph{Controller} CPU power draw.

\paragraph{Executors}
Vehicles and CC consume energy for \emph{task computation} and
\emph{output transmission}:

\begin{equation}
\label{eq: energy_node_last}
e^{n}_\tau =
\underbrace{\zeta^{n}\,W_\tau}_{\text{Computation}}
\;+\;
\underbrace{
\begin{cases}
\left(P_{\text{circ}} + \dfrac{P_{\text{tx}}}{\eta_{\text{PA}}}\right)
\dfrac{O_\tau}{R^{n}_{\text{UL}}}, & n \in \mathcal{V}_z \ \text{(vehicle)} \\[8pt]
\xi_{\text{inet}}\, O_\tau, & n = \text{CC (cloud)}
\end{cases}}_{\text{Transmission}}  \quad [J]
\end{equation}

Compute energy is metered per operation: accelerators are rated in operations per joule and a short
inference does not draw the full thermal design power for its whole
duration, so $\zeta^{n}$~[J/OP] is the inverse of the accelerator
efficiency of node $n$. The values of Table~\ref{tab:parameters_sim_last}
are taken from the datasheet efficiency of the NVIDIA Orin class of
automotive accelerators for vehicles and of the data-center accelerator
adopted for the cloud node~\cite{nvidia_orin}. Transmission energy is
metered differently on the two paths, since they are physically
different links. A vehicle transmits over the radio interface, so the
EARTH model~\cite{Auer2011} applies, with $P_{\text{circ}}$ the circuit
power, $P_{\text{tx}}$ the radiated power and $\eta_{\text{PA}}$ the
power-amplifier efficiency, parameterized as in~\cite{Auer2011}. The
cloud node instead reaches the Controller over a wired backhaul, where
neither a power amplifier nor a radio rate applies; its transmission
energy is charged at a constant intensity $\xi_{\text{inet}}$~[J/bit] per
transported bit, measured for Internet data
transfer~\cite{Aslan2018}, independent of $R^{n}_{\text{UL}}$. Beacon energy consumption is not
considered, since beacons are embedded in CAM messages already
transmitted for driving-related applications, independently of
offloading.

Energy consumption is influenced by several random variables, including
\textit{channel quality}, \textit{transmission power}, and \textit{task
execution time}. These variables are correlated with vehicular mobility.
For instance, the computational capacity of a vehicle depends on its
current workload: if it is already performing internal computation (for
instance related to driving), fewer computing resources remain available
for newly offloaded tasks. Similarly, the mobility pattern (position and
speed) affects network conditions, and coverage may fluctuate due to
changes in speed or position within the mobility scenario. In the case of
CC, the randomness primarily stems from the network condition and the
backhaul traffic. The \textbf{offloading time} and the corresponding
\textbf{energy consumption} are therefore random variables. We formalize
these uncertainties through a parameter vector defined, per each
time-slot~$z$, as
\(\boldsymbol{\theta} = (T^{n}_{\!\tau}, e^{n}_\tau,
e^{\text{NO}}_\tau)_{n\in\mathcal K_z, \tau\in\mathcal T_z}\),
where each component represents a random variable. The term
$\mathcal K_z$ represents all the stakeholders including the NO. In
contrast, the task parameters
\(\tau = (I_\tau, O_\tau, W_\tau, D_\tau)\) are assumed to be fixed upon
arrival.

\textbf{Cost Model.} For each task \(\tau \in \mathcal{T}_z\) and slot
\(z \in Z\), the energy-related cost suffered by each stakeholder
\( i \in \mathcal{K}_z = \{\text{NO}, \text{CC}, \text{VO}^1, \dots,
\text{VO}^m\}_z \) is a random variable:

\begin{equation}
\label{eq: utility_last}
c^i_\tau(\boldsymbol{\theta}) = \chi^{i}\, e^i_\tau(\boldsymbol{\theta})
\quad [\$]
\end{equation}

where \( \chi^{i} \) denotes the energy-to-cash conversion rate [\$/J],
and \( e^i_\tau(\boldsymbol{\theta}) \) is the stochastic energy
consumption of \( i \). In the following,
$\boldsymbol \theta^\omega$ denotes the realization
of~$\boldsymbol \theta$, for outcome~$\omega\in\Omega$.

\section{Task Allocation Problem}
\label{sec:strategy_allocation}

We define the system-wide utility \( f_z \) as the total value generated
at time-slot \( z \in Z \), through the assignment of tasks
\( \tau \in \mathcal{T}_z \) to the executor nodes
\( \mathcal{E}_z = \mathcal{K}_z \setminus \{\text{NO}\} \). Although the
NO does not execute tasks, it participates in every offloading operation
via the \textit{Controller} and the gNodeB (orchestration and
transmission), incurring a per-task cost
\( c^\text{NO}_{\tau,z}(\boldsymbol{\theta})\).

Each task \(\tau\) yields payment \(p(D_\tau, W_\tau)\) from the
end-user. This payment is received by the NO and will later be
redistributed according to the approach described in
Sec.~\ref{sec:stochasticity_last}. Executor nodes do not receive direct
payments but are indirectly compensated through the cooperative
cost-sharing scheme. Let
\(c^{\text{off,}i}_{\tau,z}(\omega) = c^i_{\tau,z}(\omega) +
c^\text{NO}_{\tau,z}(\omega)\)
denote the total cost incurred by executor \(i\) and the NO for the
offloading of task \(\tau\) at time-slot \(z\), under realization
\(\omega\).

The NO needs to decide allocation
$\boldsymbol{y}_z=(y_{\tau,z}^i)_{\tau\in\mathcal{T}_z,i\in\mathcal{E}_z}$,
at time-slot~$z$, where~$y_{\tau,z}^i$ is~$1$ if the task is allocated to
executor~$i$, and~$0$ otherwise. The realizations of the random variables
are not known at the moment of the decision, which has thus to be taken
based on nominal parameter values~\(\hat{\boldsymbol{\theta}}\). Some
parameters are estimated directly by the NO, e.g., network conditions
such as~$R_\text{UL}^\text{5G}$. Other parameters, such as time or energy
($T_\tau^i,e_\tau^i$), are estimated based on information reported by the
executor nodes, e.g., position, transmit power~$P_\text{tx}^i$, and
capacity available for offloading~$C^i$, via
Eqs.~\eqref{eq:T_unified}--\eqref{eq: energy_node_last}.

The allocation ranks candidate pairs by their expected margin weighted by
the residual slack they leave before the deadline. Let

\begin{equation}
\label{eq:slack}
s^{i}_{\tau} =
\frac{D_\tau - \hat T^{i}_{\tau}}{D_\tau} \in (0,1]
\end{equation}

be the normalized slack of pair $(\tau,i)$, computed from the estimated
completion time of Eq.~\eqref{eq:T_unified}, which already includes the
queueing term~\eqref{eq:pk}. Decision~$\boldsymbol{y}_z^*$ is the result
of the following optimization:

\begin{equation}
\label{eq: optimization_1}
\boldsymbol{y}_z^*  = \underset{\boldsymbol{y}_z}{\text{argmax}}
\sum_{i \in \mathcal{E}_z} \sum_{\tau \in \mathcal{T}_z}
\left( p(D_\tau, W_\tau) -  c^{\text{off},i}_{\tau,z}(\hat{\boldsymbol{\theta}}) \right)
s^{i}_{\tau}\; y^i_{\tau,z}
\end{equation}

subject to:
\begin{align}
& \sum_{i \in \mathcal{E}_z} y^i_{\tau,z} \leq 1,
&& \forall \tau \in \mathcal{T}_z
&& \text{\small (one executor per task)}
\label{eq:one_executor_per_task} \\
& \sum_{\tau \in \mathcal{T}_z} y^i_{\tau,z} \leq q^i,
&& \forall i \in \mathcal{E}_z
&& \text{\small (executor capacity)}
\label{eq:executor_capacity}
\end{align}

where $q^i$ is the per-slot quota, declared in the beacon, of tasks that $i$ accepts. Pairs whose
estimated completion time exceeds the deadline are excluded from the
candidate set before the optimization. The slack weight $s^{i}_{\tau}$
encodes a cost that energy alone misses: consuming the whole time budget
exposes the task to a late failure and congests the executor for those
that follow, so between two equally profitable pairs the optimizer
prefers the one leaving more margin. Being a constant of the pair,
computed before the optimization, it keeps
problem~\eqref{eq: optimization_1}--\eqref{eq:executor_capacity} a linear
integer program, and the guarantees of Sec.~\ref{sec:revenue_sharing} are
unaffected by the weighting. 

After offloading and task execution, the executor nodes report the actual
parameter values via updated beacons. The NO can thus recompute consumed
energy and actual execution times, which may differ from the nominal ones
used at the moment of decision. Revenue sharing is based on such updated
values (Sec.~\ref{sec:revenue_sharing}). Executors that systematically
over-declare their spare capacity are filtered before allocation by the
admission rule of Sec.~\ref{sec:admission}.

\subsection{Robust Admission}
\label{sec:admission}

Beacons may over-state the capacity a vehicle actually devotes to
offloading. An inflated capacity shortens the declared waiting time
of~\eqref{eq:pk} while lengthening the realized one, so the discrepancy
surfaces as a deadline miss only after the task has been dispatched. The
admission test guards against it before the pair enters the optimization,
using two quantities. The first is the over-declaration of executor $i$,
the 1-Wasserstein distance between the point mass at the declared
capacity $C^{i}_{\text{decl}}$ and the point mass at the mean realized
capacity $\bar C^{i}_{\text{real}}$ over its observed history, normalized
by the latter and clipped to its adversarial side:
\begin{equation}
\label{eq:dW}
      d_W^{i} \;=\; \max\!\Bigl(0,\;\bigl(C^{i}_{\text{decl}}-\bar C^{i}_{\text{real}}\bigr)/\bar C^{i}_{\text{real}}\Bigr),
\end{equation}
zero for a truthful or conservative executor and growing with the
fraction by which capacity is overstated. The second is the tail of the
offloading cost: with $\hat c^{\text{off},i}_\tau$ and
$\hat\sigma^{\text{off},i}_\tau=0.1\,\hat c^{\text{off},i}_\tau$ the
nominal mean and standard deviation of $c^{\text{off},i}_{\tau,z}$, the
Conditional Value-at-Risk at level $\alpha$ of a Gaussian cost is
$\hat c^{\text{off},i}_\tau+\kappa(\alpha)\,\hat\sigma^{\text{off},i}_\tau$
with $\kappa(\alpha)=\phi(\Phi^{-1}(\alpha))/(1-\alpha)$, $\phi$ and
$\Phi$ the standard Gaussian density and distribution
function~\cite{RockafellarUryasev2000}.

\begin{definition}[Robust admission rule]
\label{def:admission}
Task $\tau$ is admitted to executor $i$ at slot $z$ only if
\begin{equation}
\label{eq:adm_closed}
\underbrace{\hat c^{\text{off},i}_\tau}_{\text{mean cost}} + \underbrace{\kappa(\alpha)\,\hat\sigma^{\text{off},i}_\tau}_{\text{tail margin}} + \underbrace{\tfrac{\epsilon}{1-\alpha}}_{\text{ambiguity}} + \underbrace{\lambda_W p_\tau d_W^{i}}_{\text{misreport.\ penalty}} \leq \underbrace{p_\tau}_{\text{payment}}.
\end{equation}
\end{definition}

Both $d_W^{i}$ and $\lambda_W$ are dimensionless, the penalty taking its
monetary scale from $p_\tau$: the rule is invariant under a rescaling of
the tariff, and $\lambda_W$ reads as the fraction of the payment forfeited
per unit of relative over-declaration. The radius $\epsilon$, declared on the law of
the scalar cost and carrying the units of $p_\tau$, hedges a generic
mismatch between the nominal cost distribution and the true one, and
enters through the margin $\epsilon/(1-\alpha)$, an instance of a known
regularization bound for worst-case expectations over Wasserstein
balls~\cite[Th.~6.3 and Prop.~6.5]{Esfahani2018Wasserstein}, the factor
$1/(1-\alpha)$ being the Lipschitz modulus of the CVaR integrand and, the
cost being scalar and unconstrained in sign, the bound holding with
equality; the derivation and this tightness argument are in the
supplementary material. The term $\lambda_W p_\tau d_W^{i}$
instead prices a specific, executor-attributable discrepancy, itself a
Wasserstein distance~\eqref{eq:dW}, between what a node declares and what
it has delivered. Throughout the evaluation we set $\epsilon=0$, so any protection measured in Sec.~\ref{sec:robustness-results} is attributable to the misreporting term alone.

Algorithm~\ref{alg:stage1}
summarizes the per-slot procedure; the admission loop costs
$O(|\mathcal{T}_z|\,|\mathcal{E}_z|)$ scalar tests and the only
optimization is the final integer program over the admitted pairs, which
at the loads of Table~\ref{tab:parameters_sim_last} carries a few tens of
binaries and runs well under the $5$\,ms slot.

\begin{algorithm}[t]
\caption{Stage~1: real-time allocation at slot $z$.}
\label{alg:stage1}
\begin{algorithmic}[1]
\Require tasks $\mathcal{T}_z$; beacon history; parameters
  $(\alpha,\epsilon,\lambda_W)$; payments $p_\tau$
\Ensure allocation $\boldsymbol{y}_z^{*}$
\State collect the beacons of the slot and update the executor ranking
\State $\mathcal{E}_z \gets \{\mathrm{CC}\}\cup\text{top-}B\text{ ranked vehicles}$,
  with $B=\min(|\mathcal{T}_z|,M_{\text{veh}})$
\ForAll{pairs $(\tau,i)\in\mathcal{T}_z\times\mathcal{E}_z$}
  \State estimate $\hat T^{i}_\tau$ by Eq.~\eqref{eq:T_unified} and
    $\hat c^{\text{off},i}_\tau$ by
    Eqs.~\eqref{eq:energy_last_NO}--\eqref{eq: utility_last}
  \State compute the over-declaration $d_W^{i}$ of Eq.~\eqref{eq:dW}
    from the beacon history
  \If{$\hat T^{i}_\tau\le D_\tau$ \textbf{and} the admission test of
    Eq.~\eqref{eq:adm_closed} holds}
    \State mark $(\tau,i)$ as admissible
  \EndIf
\EndFor
\State $\boldsymbol{y}_z^{*}\gets$ solve the
  ILP~\eqref{eq: optimization_1}--\eqref{eq:executor_capacity} over the
  admissible pairs
\State dispatch each task to its executor; \Return $\boldsymbol{y}_z^{*}$
\end{algorithmic}
\end{algorithm}

\section{Revenue Sharing via Cooperative Game Theory}
\label{sec:revenue_sharing}

At every time-slot \( z \in Z \), tasks are allocated through
optimization~\eqref{eq: optimization_1}--\eqref{eq:executor_capacity},
and the resulting utility must be redistributed among the involved
stakeholders. We model revenue sharing via coalitional game theory. Let
\begin{equation}
\mathcal{A}_z = \left\{ \textstyle i \in \mathcal{E}_z \mid
\sum_{\tau \in \mathcal{T}_z} y^{i,*}_{\tau,z} \geq 1 \right\}
\end{equation}

be the subset of executors that are effectively selected by the optimal
allocation \(\boldsymbol{y}_z^*\) at time-slot \(z\). The set of players
is \( \mathcal{N}_z = \mathcal{A}_z \cup \{\text{NO}\} \). A classical
coalitional game is defined by the pair \( (\mathcal{N}_z, v_z) \), where
\( v_z : 2^{\mathcal{N}_z} \to \mathbb{R} \) assigns a value to each
possible coalition \( \mathcal{S} \subseteq \mathcal{N}_z \),
representing the maximum utility that \( \mathcal{S} \) can achieve
independently of the remaining players. The grand coalition corresponds
to the full set \( \mathcal{N}_z \), and the domain
\( 2^{\mathcal{N}_z} \) is the set of all its \( 2^{|\mathcal{N}_z|} \)
possible subgroups. A payoff allocation
\( x \in \mathbb{R}^{\mathcal{N}_z} \) represents the revenue assigned to
each player. It is \textbf{efficient} if it fully distributes the value
of the grand coalition, i.e.,
\( \sum_{i \in \mathcal{N}_z} x^i = v_z(\mathcal{N}_z) \), and
\textbf{coalitionally rational} if no group of players receives less than
its own value, i.e.,
\( \sum_{i \in \mathcal{S}} x^i \ge v_z(\mathcal{S}) \) for all
\( \mathcal{S} \subseteq \mathcal{N}_z \). Any allocation satisfying both
properties belongs to the \emph{core} of the game, the set of stable
distributions under which no subset of players has an incentive to
deviate from the grand coalition.

\begin{remark}[End-users reward]
End-users are considered through strict deadline constraints: tasks are
accepted and executed only if their end-to-end delay requirements are
met, with latency minimization as their implicit reward, ensuring QoS.
\end{remark}

\subsection{Robust Coalitional Game}
\label{sec:stochasticity_last}

The NO plays a central role in payment management since the payments pass
through it. As both the infrastructure owner and the communication
coordinator, the NO is responsible for solving the task allocation
problem~\eqref{eq: optimization_1}. The value of a coalition
\( \mathcal S \subseteq \mathcal N_z \) is defined via a \emph{value
function} that evaluates the realized utility for each coalition after a
specific outcome \( \omega \in \Omega \), once the task allocation
decision has been made on nominal
values~(Sec.~\ref{sec:strategy_allocation}). The coalition value function
\( v_z^\omega(\cdot) \) is stochastic, due to the deviation from nominal
values that impacts the costs and the collected payments. We formalize a
game with \textit{Transferable Utility}~(TU), since the value collected by the coalition
can be \emph{transferred} among players. Given a weight
$\beta^i \in [0,1]$ multiplying $p(D_\tau, W_\tau)$, we represent the
proportion of payment given to the NO or to the other players, with
$\sum_{i\in\mathcal{N}_z} \beta^i =1$.

In the evaluation, we do not fix the weight vector $\boldsymbol{\beta}=(\beta^i)_{i\in\mathcal N_z}$
directly, but define a payoff rule based on the realized outcome. Let
$p_\tau$ denote the payment associated with task
$\tau \in \mathcal{T}_z$ and let $c^{i}_\tau(\boldsymbol{\theta}^\omega)$
be the realized cost of player $i \in \mathcal{N}_z$ under outcome
$\omega$. The total realized payoff generated by task $\tau$ is

\begin{equation}
\Pi^\omega_\tau = p_\tau -
\sum_{i\in \mathcal{S}_z} c^{i}_\tau(\boldsymbol{\theta}^\omega),
\end{equation}

where $\mathcal{S}_z \subseteq \mathcal{N}_z$ denotes the set of players
involved in the execution of $\tau$ at slot $z$.

\paragraph{Outer game}
The NO behaves as an ex-ante veto player: without its
infrastructure and orchestration services, no coalition can form or
generate value. The set of selected vehicular executors acts, in
turn, as a \emph{collective veto player}, since without at least one
effective executor the task cannot be processed and no value is created.
In unanimity games, where a coalition consists of $k$ veto players, the
Shapley value assigns an equal share $1/k$ of the total value to each
veto player and zero to all others~\cite{Beal2014}. With two veto blocks this gives the NO half of the total payoff,
\begin{equation}
\label{eq:pi_NO}
\pi^\omega_{\mathrm{NO}}(\tau) = \tfrac{1}{2}\,\Pi^\omega_\tau ,
\end{equation}
a rule symmetric in the sign of $\Pi^\omega_\tau$: the operator absorbs
half of any realized deficit just as the executor group does.

\paragraph{Inner game}
The remaining half is distributed among the selected vehicular executors
in proportion to the cost each of them actually incurred,
\begin{equation}
\label{eq:pi_i_cost}
\pi^\omega_i(\tau) = \tfrac{1}{2}\,\Pi^\omega_\tau \cdot w_i^\omega,
\quad
w_i^\omega := \frac{c^i_\tau(\boldsymbol{\theta}^\omega)}{\sum_{j\in \mathcal{S}_z\setminus\{\mathrm{NO}\}} c^j_\tau(\boldsymbol{\theta}^\omega)},
\end{equation}
for $i \in \mathcal{S}_z\setminus\{\mathrm{NO}\}$. The weight is the
realized energy cost of~\eqref{eq: utility_last}, an observable quantity
recomputed from the post-execution beacons. It is non-negative for every
$\omega$ and sums to one.

\paragraph{Complexity}
The outer game is solved in $O(1)$ and the inner game in
$O(|\mathcal{S}_z|)$. We then verify
stability by checking the core constraints given by the rationality conditions of coalitional games. If one or more constraints
are violated, we compute a corrected imputation by solving a small linear
program that enforces the core conditions, equivalently a projection onto
the feasible set. With $n=|\mathcal{S}_z|$ players and $m$ enforced
constraints, this runs in $\mathrm{poly}(n,m)$ time, e.g.\
$O((n+m)^3)$ with interior-point methods, which for coalitions in the
order of tens of nodes remains compatible with real-time operation.

Payment is collected from the end-users \emph{only} if the task completes
before its deadline. Let \( T^{i,\omega}_\tau \) denote the actual
completion time of task~\(\tau\) by executor~\(i\) under
realization~\(\omega\), and define the indicator
\( \mathbf{1}_{i\tau}^\omega :=
\mathbf{1}_{\{T^{i,\omega}_\tau \le D_\tau\}} \).
This indicator equals~1 only when the task meets its deadline at
sample~$\omega$; otherwise it is~0, and both the executor and the network
operator receive no revenue for that task, regardless of its nominal
allocation. A vehicle misreporting nominal values to be selected in
Stage~1 can only influence its allocation, not its term in the value
function: if the realized $T^{i,\omega}_\tau$ exceeds $D_\tau$ the revenue
contribution vanishes while the realized cost still subtracts, so
sustained misreporting strictly decreases the misreporter's payoff.

For any coalition \( \mathcal{S} \subseteq \mathcal{N}_z \), we define
the set of tasks effectively offloaded within the coalition as:
\begin{equation}
\label{eq:allocated_task_set}
\mathcal{T}_z^{\mathcal{S}} :=
\left\{ \tau \in \mathcal{T}_z \;\middle|\;
\exists\, i \in \mathcal{S} \setminus \{\mathrm{NO}\} \text{ such that }
y^{i,*}_{\tau,z} = 1 \right\}.
\end{equation}

When the NO acts alone, i.e., \(\mathcal{S} = \{\mathrm{NO}\}\), it
incurs only the cost associated with its own allocation and receives a
fraction \(\beta^{\mathrm{NO}}\) of the total payment
\(p_\tau = p(D_\tau, W_\tau)\):

\begin{equation}
\label{eq:value_function_NO_alone}
v_z^\omega(\mathcal{S}) =
\sum_{\tau \in \mathcal{T}^{\mathcal{N}_z}_z}
\left(
\beta^{\mathrm{NO}}\, p_\tau\, \mathbf{1}_{\mathrm{NO},\tau}^\omega
-
c^{\mathrm{NO}}_\tau(\boldsymbol{\theta}^\omega)
\right)
\end{equation}

For any coalition
$\mathcal{S}\subseteq\mathcal{N}_z\setminus \{\mathrm{NO}\}$, each
executor receives its payment share, and the NO's energy cost is not
accounted for in the value function. While
\eqref{eq:value_function_NO_alone} includes all tasks, as they are fully
handled by the NO, here the set of processed tasks depends on the
coalition \(\mathcal{S}\):

\begin{equation}
\label{eq:value_function_NO_not_in}
v_z^\omega(\mathcal{S}) =
\sum_{i \in \mathcal{S}} \sum_{\tau \in \mathcal{T}_z^{\{i\}}}
\left(
\beta^i\, p_\tau\, \mathbf{1}_{i\tau}^\omega
-
c^i_\tau(\boldsymbol{\theta}^\omega)
\right)
\end{equation}

For every coalition \( \mathcal{S} \ni \mathrm{NO} \), the value
function includes the contributions of both the network operator and the
participating executors:

\begin{equation}
\label{eq:value_function_NO_in}
v_z^\omega(\mathcal{S}) =
v_z^\omega(\{\mathrm{NO}\}) + v_z^\omega(\mathcal{S} \setminus \{\mathrm{NO}\})
\end{equation}

We define a \emph{robust cooperative game} as the tuple
\( (\mathcal{N}_z, \mathcal{V}_z) \), where \( \mathcal{N}_z \) is the
set of players and
\( \mathcal{V}_z = \left\{ v_z^\omega \mid \omega \in \Omega \right\} \)
is the set of all possible realizations of the coalition value function
under uncertainty. To guarantee core stability, we prove that both
\emph{efficiency} and \emph{coalitional rationality} hold for every
\( v_z^\omega \in \mathcal{V}_z \). It then follows, by the
Bondareva--Shapley theorem (Proposition 262.1
in~\cite{Osborne1994}), that the core of each game
\( (\mathcal{N}_z, v_z^\omega) \) is non-empty.

\begin{theorem}[Robust Core Non-Emptiness]
\label{thm:core_nonempty_BS_beta}
Fix a slot \(z\in Z\), a realization \(\omega\in\Omega\), and a weight
vector \(\boldsymbol{\beta}=(\beta^i)_{i\in\mathcal N_z}\) with
\(\beta^i\ge0\) and \(\sum_{i\in\mathcal N_z}\beta^i=1\).
Let \(\mathcal C := 2^{\mathcal N_z}\) be the set of all coalitions, and
let \(G_{z,\boldsymbol\beta}^{\omega} = (\mathcal N_z, v_z^\omega)\) be
the TU-game whose value function is defined by
Eqs.~\eqref{eq:value_function_NO_alone}--\eqref{eq:value_function_NO_in},
with task sets \(\mathcal{T}_z^{\mathcal{S}}\) as in
Eq.~\eqref{eq:allocated_task_set}. Assume that each task is assigned to
at most one executor, i.e.,
\(\mathcal{T}_z^{\{i\}} \cap \mathcal{T}_z^{\{j\}} = \varnothing\) for
\(i \neq j\). Then the game \(G_{z,\boldsymbol\beta}^{\omega}\) is
balanced and its core is non-empty.
\end{theorem}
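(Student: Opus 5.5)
The plan is to show that the game $G^{\omega}_{z,\boldsymbol\beta}$ is \emph{additive} (inessential), that is, $v^\omega_z(\mathcal S)=\sum_{i\in\mathcal S}v^\omega_z(\{i\})$ for every $\mathcal S\in\mathcal C$. Once this holds, balancedness follows in one line and an explicit core element can be written down. No property of the random outcome $\omega$ or of $\boldsymbol\beta$ beyond $\beta^i\ge 0$ is needed, because the argument is purely combinatorial on the value function of Eqs.~\eqref{eq:value_function_NO_alone}--\eqref{eq:value_function_NO_in}. This is what makes the statement hold for every $v^\omega_z\in\mathcal V_z$.

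\textbf{Step 1: additivity.} For each executor $i\in\mathcal A_z$ define
\[
a_i:=\sum_{\tau\in\mathcal T_z^{\{i\}}}\bigl(\beta^i p_\tau \mathbf 1^\omega_{i\tau}-c^i_\tau(\boldsymbol\theta^\omega)\bigr),
\]
and set $a_{\mathrm{NO}}:=v^\omega_z(\{\mathrm{NO}\})$ as given by Eq.~\eqref{eq:value_function_NO_alone}. Applying Eq.~\eqref{eq:value_function_NO_not_in} to a singleton gives $v^\omega_z(\{i\})=a_i$. For $\mathcal S\not\ni\mathrm{NO}$, the same equation gives $v^\omega_z(\mathcal S)=\sum_{i\in\mathcal S}a_i$. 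For $\mathcal S\ni\mathrm{NO}$, Eq.~\eqref{eq:value_function_NO_in} gives $v^\omega_z(\mathcal S)=a_{\mathrm{NO}}+\sum_{i\in\mathcal S\setminus\{\mathrm{NO}\}}a_i$. We adopt the standard convention $v^\omega_z(\varnothing)=0$, which agrees with the empty sum.

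The disjointness hypothesis $\mathcal T_z^{\{i\}}\cap\mathcal T_z^{\{j\}}=\varnothing$ is what licenses this reading. It guarantees that the task set of a coalition satisfies $\mathcal T_z^{\mathcal S}=\bigsqcup_{i\in\mathcal S\setminus\{\mathrm{NO}\}}\mathcal T_z^{\{i\}}$. Hence the double sum in Eq.~\eqref{eq:value_function_NO_not_in} counts each task exactly once, with its own executor's revenue and cost, and never charges a task's payment twice. This is the step I expect to need the most care, and it is where the hypothesis is used. One must check that the per-executor decomposition genuinely coincides with the value of the tasks $\mathcal T^{\mathcal S}_z$ served inside $\mathcal S$.

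\textbf{Step 2: balancedness.} Let $\{\gamma_{\mathcal S}\}_{\mathcal S\in\mathcal C}$ be any balanced collection of weights, that is, $\gamma_{\mathcal S}\ge0$ and $\sum_{\mathcal S\ni i}\gamma_{\mathcal S}=1$ for every $i\in\mathcal N_z$. Exchanging the order of summation and using additivity gives
\[
\sum_{\mathcal S}\gamma_{\mathcal S}v^\omega_z(\mathcal S)=\sum_{i\in\mathcal N_z}a_i\sum_{\mathcal S\ni i}\gamma_{\mathcal S}=\sum_{i\in\mathcal N_z}a_i=v^\omega_z(\mathcal N_z).
\]
So the Bondareva--Shapley inequality holds, with equality. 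By Proposition 262.1 of~\cite{Osborne1994}, the core is therefore non-empty.

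\textbf{Step 3: explicit core element.} As a constructive check, the allocation $x^i=a_i$ is a core element. It is efficient by additivity, since $\sum_{i\in\mathcal N_z}x^i=v^\omega_z(\mathcal N_z)$. It is coalitionally rational because $\sum_{i\in\mathcal S}x^i=v^\omega_z(\mathcal S)$ holds with equality for every $\mathcal S$. Neither $\Pi^\omega_\tau$ nor the indicators need to be nonnegative, so the result holds even under adverse realizations $\omega$.
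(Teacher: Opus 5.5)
Your proof is correct and is essentially the paper's argument. The paper also takes an arbitrary balanced collection, exchanges the order of summation using $\sum_{\mathcal S\ni i}\gamma_{\mathcal S}=1$, and recovers $v^\omega_z(\mathcal N_z)$ with equality; this is your additivity step left implicit, and your explicit core element $x^i=v^\omega_z(\{i\})$ sharpens the paper's bare existence claim.

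One small remark: additivity already follows from Eqs.~\eqref{eq:value_function_NO_not_in}--\eqref{eq:value_function_NO_in} being defined as sums of per-player terms. So the disjointness hypothesis is needed only to read that sum as the value of the tasks in $\mathcal T_z^{\mathcal S}$, not for balancedness itself.
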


\begin{proof}
Set \(d_{i\tau}:=\beta^i p_\tau\mathbf1^\omega_{i\tau}\),
\(d^\mathrm{NO}_\tau:=\beta^{\mathrm{NO}}p_\tau\mathbf1^\omega_{\mathrm{NO},\tau}\),
\(c_{i\tau}:=c^i_\tau(\boldsymbol\theta^\omega)\),
\(c^\mathrm{NO}_\tau:=c^{\mathrm{NO}}_\tau(\boldsymbol\theta^\omega)\).

Let \(\{\gamma_{\mathcal S}\}_{\mathcal S\in\mathcal C}\) be any balanced
collection: \(\gamma_{\mathcal S}\ge0\) and
\(\sum_{\mathcal S\ni i}\gamma_{\mathcal S}=1\) for all
\(i\in\mathcal N_z\). Using the definition of \(v_z^\omega(\mathcal S)\)
and exchanging sums:

\begin{equation*}
\sum_{\mathcal{S} \in \mathcal{C}} \gamma_{\mathcal{S}}\, v_z^\omega(\mathcal{S})
= \sum_{\mathcal{S} \in \mathcal{C}} \gamma_{\mathcal{S}}
  \left[
      \sum_{i \in \mathcal{S} \setminus \{\mathrm{NO}\}}
      \sum_{\tau \in \mathcal{T}_z^{\{i\}}}
      \left(d_{i\tau} - c_{i\tau}\right)
  \right]
\end{equation*}
\begin{equation*}
\quad + \sum_{\mathcal{S} \in \mathcal{C}} \gamma_{\mathcal{S}}
  \left[
      \mathbf{1}_{\{\mathrm{NO} \in \mathcal{S}\}}
      \sum_{\tau \in \mathcal{T}_z^{\mathcal{N}_z}}
      \left(d^\mathrm{NO}_\tau - c^\mathrm{NO}_\tau\right)
  \right]
\end{equation*}
\begin{equation*}
= \sum_{i \ne \mathrm{NO}}
    \sum_{\tau \in \mathcal{T}_z^{\{i\}}}
    \left(d_{i\tau} - c_{i\tau}\right)
    \sum_{\mathcal{S} \ni i} \gamma_{\mathcal{S}}
\end{equation*}
\begin{equation*}
+ \sum_{\tau \in \mathcal{T}_z^{\mathcal{N}_z}}
    \left(d^\mathrm{NO}_\tau - c^\mathrm{NO}_\tau\right)
    \sum_{\mathcal{S} \ni \mathrm{NO}} \gamma_{\mathcal{S}}
\end{equation*}
\begin{equation*}
= \sum_{i \ne \mathrm{NO}}
    \sum_{\tau \in \mathcal{T}_z^{\{i\}}}
    \left(d_{i\tau} - c_{i\tau}\right)
+ \sum_{\tau \in \mathcal{T}_z^{\mathcal{N}_z}}
    \left(d^\mathrm{NO}_\tau - c^\mathrm{NO}_\tau\right)
\end{equation*}

Define
\begin{align*}
\Sigma_1 &= \sum_{\tau \in \mathcal{T}_z^{\mathcal{N}_z}}
\Bigl( d^\mathrm{NO}_\tau + \sum_{i \ne \mathrm{NO}} d_{i\tau}\Bigr),\\
\Sigma_2 &= \sum_{\tau \in \mathcal{T}_z^{\mathcal{N}_z}}
\Bigl( c^\mathrm{NO}_\tau + \sum_{i \ne \mathrm{NO}} c_{i\tau} \Bigr).
\end{align*}
Then
\[
v_z^\omega(\mathcal{N}_z) = \Sigma_1 - \Sigma_2, \qquad
\sum_{\mathcal{S} \in \mathcal{C}} \gamma_{\mathcal{S}}\, v_z^\omega(\mathcal{S})
= \Sigma_1 - \Sigma_2,
\]
which satisfies the Bondareva--Shapley inequality with equality, so the
game is balanced and
\(\mathrm{Core}(G_{z,\boldsymbol{\beta}}^\omega) \ne \emptyset\).
Furthermore, there exists an imputation vector
\(\boldsymbol{x} = (x^i)_{i \in \mathcal{N}_z}\) such that
\[
\sum_{i \in \mathcal{N}_z} x^i = v_z^\omega(\mathcal{N}_z), \qquad
\sum_{i \in \mathcal{S}} x^i \ge v_z^\omega(\mathcal{S})
\quad \forall \mathcal{S} \subsetneq \mathcal{N}_z. \qquad\blacksquare
\]
\end{proof}

\section{Simulation Framework}
\label{sec:simulation_last}

\begin{table}[t!]
  \centering
  \scriptsize
  \setlength{\tabcolsep}{3pt}
  \begin{tabularx}{\linewidth}{|l|X|}
    \hline
    \textbf{Parameter} & \textbf{Value} \\
    \hline
    \multicolumn{2}{|c|}{\textbf{Architecture and Mobility}} \\
    \hline
    Vehicle average speed & $13.1$\,km/h (Rome downtown, SUMO)~\cite{limitedtimes23} \\
    Cloud nodes / Edge baseline & $1$ / single NVIDIA A30, $3.3\times10^{14}$\,OP/s~\cite{nvidia_a30} \\
    Vehicles / end-users & $10$--$200$ / $25$--$200$ (default $100$)~\cite{vehicular_cloud_computing_2025} \\
    Vehicle peak compute / spare & $3\times10^{14}$\,OP/s~\cite{nvidia_orin} / $1\%$--$20\%$ of peak \\
    gNodeB coverage radius & $500$\,m (5G NR n78 urban micro-cell)~\cite{3gpp38901} \\
    \hline
    \multicolumn{2}{|c|}{\textbf{General Settings}} \\
    \hline
    Duration / warm-up / seeds & $30$\,s / $30$\,s / $10$; allocation round and task window $5$\,ms \\
    Task rate & $1$--$10$ tasks/s per user (Poisson) \\
    \hline
    \multicolumn{2}{|c|}{\textbf{Tasks and Queueing}} \\
    \hline
    Workload $W_\tau$ & discrete law on $\{10^{8},10^{9},10^{10},10^{11},10^{12}\}$\,OP, prob. $\{0.10,0.20,0.30,0.25,0.15\}$ \\
    Mean workload, $\mathrm{CV}^{2}$ & $1.78\times10^{11}$\,OP, $3.80$ (heavy-tailed $\Rightarrow$ M/G/1) \\
    Task size / deadlines & $I_\tau=O_\tau=8000$\,bits~\cite{zhang15} / $D_\tau\in\{16,100,500\}$\,ms~\cite{benameur21} \\
    Payment per task & $p_\tau(D_\tau)\in\{2.63,1.43,1.03\}\,\mu\$$ (AWS Lambda 2026) \\
    \hline
    \multicolumn{2}{|c|}{\textbf{Energy}} \\
    \hline
    Energy price & $0.21$\,\$/kWh (U.S.\ residential avg.)~\cite{eia_electricity_2024} \\
    Controller CPU power & $200$\,W \\
    Compute energy $\zeta^{n}$, vehicle / cloud & $4.35\times10^{-13}$ / $4.24\times10^{-13}$\,J/OP~\cite{nvidia_orin} \\
    Internet energy intensity & $2.7\times10^{-6}$\,J/bit~\cite{Aslan2018} \\
    \hline
    \multicolumn{2}{|c|}{\textbf{Robust Admission}} \\
    \hline
    CVaR level $\alpha$ / radius $\epsilon$ & $0.90$ ($\kappa=1.7549$) / $0$ (misreporting isolated) \\
    Penalty $\lambda_W$ & $5.0$ \\
    Misreporting fraction $\psi$ / intensity & $\{0,0.2,0.4,0.6,0.8\}$ / $0.6$ \\
    \hline
    \multicolumn{2}{|c|}{\textbf{Network}} \\
    \hline
    Channel / carrier / bandwidth & TR~38.901 Urban NLOS, Rayleigh, shadowing $4$\,dB; $3.5$\,GHz (n78) / $100$\,MHz \\
    Internet leg: $\delta_{\text{inet}}$ / $d_{\text{cloud}}$ / $R_{\text{inet}}$ & $35$\,ms one-way~\cite{verizon23} / $10$\,km / $100$\,Gb/s \\
    gNodeB / UE Tx power & $23$\,dBm~\cite{remmaps23}, $2$ streams, $5$\,dB beamforming \\
    Rate model / efficiency & Shannon-based SINR approximation / $0.85$ \\
    Signal speed, air / fiber & $c_{\text{air}}\!\approx\!3\times10^{8}$\,m/s / $c_{\text{fib}}=(2/3)c_{\text{air}}$ \\
    \hline
    \end{tabularx}
    \caption{Simulation parameters.}
  \label{tab:parameters_sim_last}
\end{table}

This section details the simulation framework, aiming at full
repeatability; default parameters are in
Table~\ref{tab:parameters_sim_last}. Unless stated otherwise each point
is the mean over $10$ independent seeds with $95\%$ Student-$t$
confidence intervals, whose overlap is the criterion for statistical
significance. The per-task payment $p_\tau$ is anchored to the
Amazon Web Services Lambda 2026 tariff at $10$ tasks/s per user; lacking
public real-world measurements this remains a theoretical approximation,
kept conservative. Vehicles dedicate only a spare fraction of their
compute capacity to offloading, swept over $\{1,3,5,8,10,15,20\}\%$ of
the total compute capacity, as most resources are reserved for higher-priority functions such
as navigation, safety, and sensor processing.

\subsection{Experimental Scenario}
\label{sec:exp_scenario}
The evaluation reproduces a single 5G micro-cell in a larger urban map.
The SUMO trace of downtown Rome is wider than the $500$\,m serving
radius, so vehicles are born outside coverage, cross the cell and leave:
the set of usable executors is never static. A $30$\,s warm-up precedes
each measured $30$\,s run, so population and queues reach stationarity
before statistics are collected. On top of this substrate we vary vehicle
density from $10$ to $200$, the spare-capacity fraction, and a controlled
fraction $\psi$ of vehicles over-stating their spare capacity in the
Stage-1 beacon at fixed intensity $0.6$; the sweep over end-users from
$25$ to $200$ is reported in the supplementary material. Tasks draw
$W_\tau$ from the discrete law of Table~\ref{tab:parameters_sim_last},
spanning four orders of magnitude; this heterogeneity makes the service
time heavy-tailed and is the reason the M/G/1 model of
Sec.~\ref{sec:time_models} is required.

\subsection{Mobility and Networking}
\label{sec:networking}
Mobility traces come from SUMO on real road and traffic data of Rome,
average speed $13.1$\,km/h~\cite{limitedtimes23}. The model includes
traffic lights, intersections, pedestrian crossings and car-following
dynamics, inducing stop-and-go behavior so that speed and position evolve
non-uniformly, producing realistic dwell-time fluctuations; end-user UEs
are static during each offloading round, their displacement being
negligible (under $1$\,cm for a fast walker over a $5$\,ms round). 5G
performance uses the TR~38.901 urban non-line-of-sight model with path
loss, log-normal shadowing ($4$\,dB) and Rayleigh fading; rates follow a
Shannon-based SINR approximation with interference across active
vehicles, while cloud offloading uses fixed-bandwidth Internet links with
core latency. A single serving gNodeB is co-located with the
\emph{Controller}, with a $500$\,m radius consistent with a 5G NR n78
urban micro-cell~\cite{3gpp38901}; the coverage churn described above
follows, while inter-gNB handovers are excluded by construction. They
are not modeled because their typical duration ($80$--$120$\,ms) exceeds
the $5$\,ms decision window and spans multiple allocation rounds. Within
a round vehicle displacement is negligible ($\sim$$1.85$\,cm), and fading
and shadowing are re-sampled across rounds.

\subsection{Compared Strategies}
\label{sec:compared_strategies}
We benchmark the proposed scheme against five alternatives that isolate,
one component at a time, the contribution of admission and of allocation
optimality; all share the same mobility, channel, queueing and energy
models.

\paragraph{\textsf{DRO} strategy} The full scheme, where
Def.~\ref{def:admission} pre-filters the pairs feeding the exact
allocation~\eqref{eq: optimization_1}.

\paragraph{\textsf{no-DRO} strategy} Solves the same exact allocation
without the pre-filter, trusting every declared capacity.

\paragraph{\textsf{Greedy} strategy} Assigns
each task to the highest-ranked feasible executor by nominal completion
time, with neither the integer program nor admission.

\paragraph{\textsf{Vehicles-first} strategy~\cite{vehicular_cloud_computing_2025}}
Prioritizes offloading to vehicles as a cost-effective alternative to
edge computing, the closest external baseline to our setting; with queue
length fixed to $q^i=1$ per vehicle it keeps loading already-committed
vehicles.

\paragraph{\textsf{Edge-only} strategy} Serves every task from a single
NVIDIA A30 provisioned for the peak task load and is the
edge infrastructure reference of Sec.~\ref{sec:emissions_analysis};
it is not capacity-matched, delivering $3.3\times10^{14}$\,OP/s against
${\sim}10^{14}$ for the in-cell fleet at $3\%$, so it serves essentially
every task.

\paragraph{\textsf{Cloud-only} strategy} Serves every task from the
always-present CC node.

\noindent\textsf{DRO} vs.\ \textsf{no-DRO} isolates what admission buys
with optimality held fixed, \textsf{no-DRO} vs.\ \textsf{Greedy} what
optimality buys without admission.

Two economic inputs are stylized: the per-task payment $p_\tau$ and the
conversion rate $\chi^{i}$. Both enter
every utility linearly, so rescaling them rescales every monetary curve
by the same factor and, energy costs being orders of magnitude below
revenues at all densities (Fig.~\ref{fig: energy_last}), preserves the
sign of the net utility.

\section{Simulation Results}
\label{sec: results-last}

\begin{figure*}[t]
\centering
\begin{subfigure}[b]{0.40\textwidth}
  \centering\includegraphics[width=\linewidth]{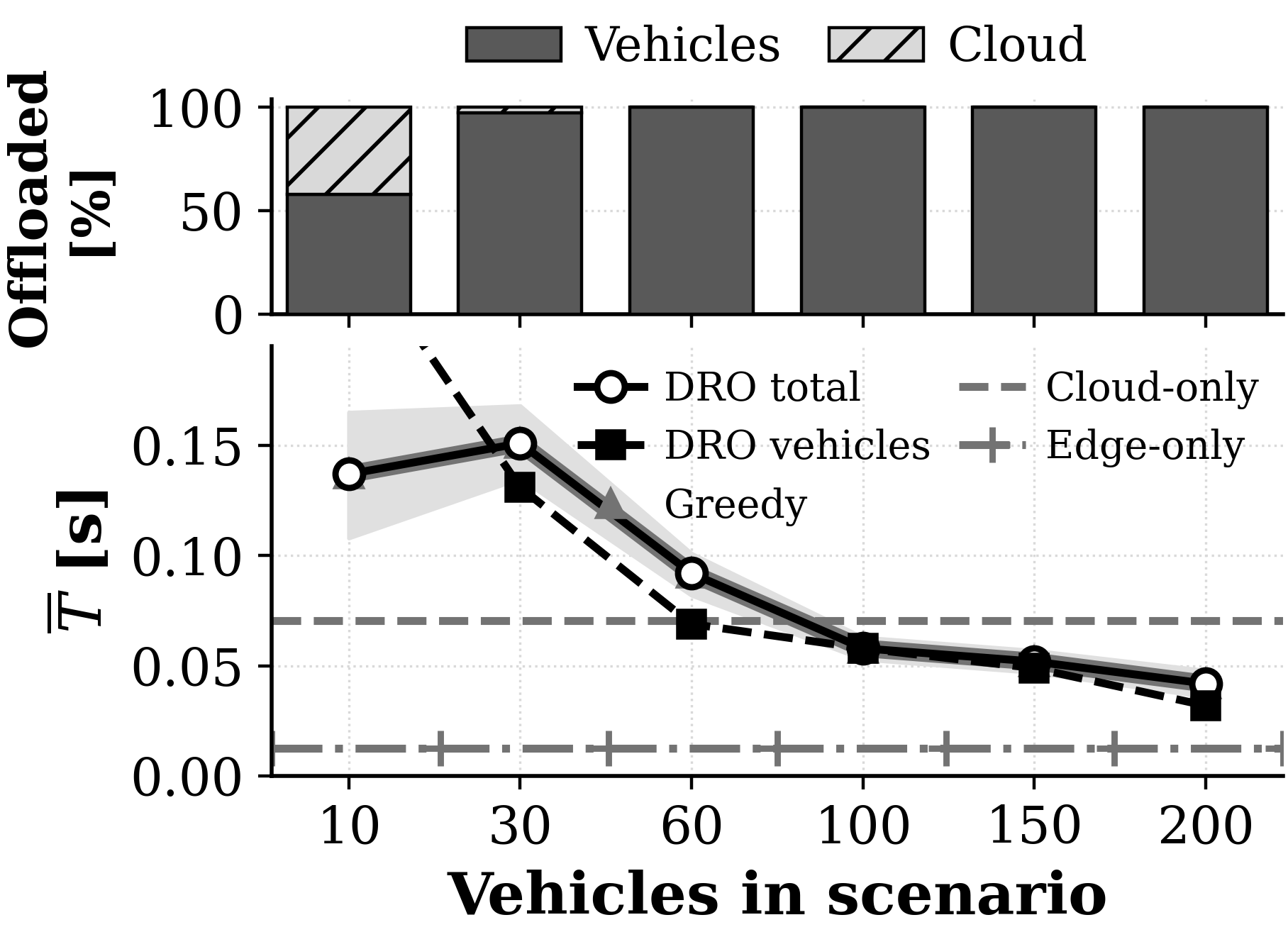}
  \caption{Offloading split and completion time.}
  \label{fig: rate_off_time}
\end{subfigure}\hfill
\begin{subfigure}[b]{0.42\textwidth}
  \centering\includegraphics[width=\linewidth]{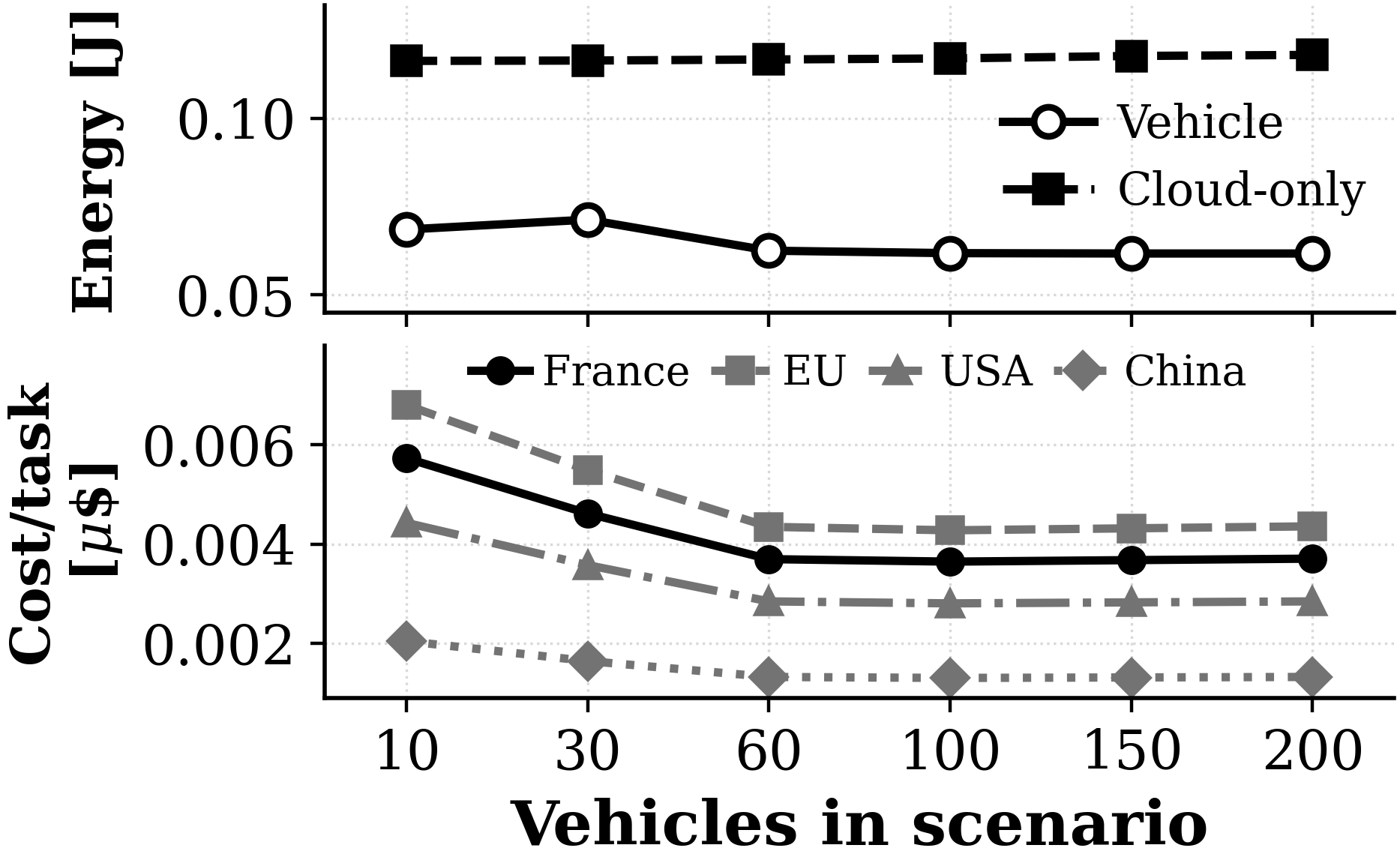}
  \caption{Per-task energy and cost.}
  \label{fig: energy_last}
\end{subfigure}
\caption{VCC evaluation versus vehicle density at $10$ tasks/s per $100$
users and $10\%$ spare capacity. \textbf{(a)} Offloading split (bars) and
mean completion time $\overline{T}$, the average of $T^{n}_\tau$ over the
served tasks and the $10$ seeds, by strategy; the shaded band is the
$95\%$ confidence interval of \textsf{DRO}, \textsf{Greedy} overlaps
\textsf{DRO} at every density, and the cloud floor is its $\sim$$70$\,ms
round trip. \textbf{(b)} Per-task energy on the vehicular and cloud
paths, and its monetary cost under four national electricity prices.}
\label{fig:results_perf}
\end{figure*}

\begin{table}[t]
  \centering\scriptsize\setlength{\tabcolsep}{3.2pt}

  \begin{tabular}{lrrrrrr}
    \toprule
    Vehicles & 10 & 30 & 60 & 100 & 150 & 200 \\
    \midrule
    \multicolumn{7}{l}{\emph{Failure rate} [\%]}\\
    \textsf{DRO} $=$ \textsf{no-DRO} $=$ \textsf{Greedy} & 41 & 48 & 35 & 21 & 16 & 13 \\
    \textsf{Vehicles-first} & 44 & 62 & 46 & 20 & 7 & 5 \\
    \textsf{Cloud-only} & 33 & 33 & 33 & 33 & 33 & 33 \\
    \textsf{Edge-only} & 1.6 & 1.9 & 1.6 & 2.4 & 2.0 & 1.5 \\
    \midrule
    \multicolumn{7}{l}{\emph{Realized utility} [$10^{3}\,\mu\$$ per run]}\\
    Vehicular share & 8.6 & 17.4 & 25.2 & 32.8 & 35.3 & 36.1 \\
    Cloud share & 11.5 & 0.7 & 0.0 & 0.0 & 0.0 & 0.0 \\
    \textbf{Total} & \textbf{20.1} & \textbf{18.1} & \textbf{25.2} & \textbf{32.8} & \textbf{35.3} & \textbf{36.1} \\
    \hspace{1em}$\pm$ half-width & 1.4 & 1.1 & 3.9 & 2.1 & 0.7 & 0.4 \\
    \bottomrule
  \end{tabular}
    \caption{Failure rate [\%] by strategy and aggregate utility split,
  versus vehicle density (honest fleet, $10$ tasks/s per $100$ users,
  $10\%$ spare capacity).
  \textsf{DRO}, \textsf{no-DRO} and \textsf{Greedy} coincide within
  confidence intervals under honest reporting (paired difference
  $-26\pm37$ tasks on $\approx3\times10^{4}$) and are reported once.
  Both blocks are means over the $10$ seeds. The utility is the one
  \emph{realized}, i.e.\ net of the tasks that miss their deadline at
  execution, with the half-width of the $95\%$ Student-$t$ interval on the
  total; the split between the two executors is measured on the single
  seed whose per-assignment trace was retained and applied to the
  realized total.}
  \label{tab:failure_utility}
\end{table}

\subsection{Viability: Offloading Time and Failure Rate}
\label{sec: performance-results-last}

Fig.~\ref{fig: rate_off_time} shows the offloading split and the mean
per-task completion time as a function of vehicle density, at
$10$ tasks/s per $100$ users and $10\%$ spare capacity, a rate that
stresses the system. At $10$ vehicles the few in-coverage executors
saturate: the vehicle-only completion time leaves the plotted range
($250$\,ms), so the allocator diverts close to half of the tasks to the
cloud, which caps the mean at $140$\,ms. As density grows the fleet takes
over, serving $97\%$ of the tasks at $30$ vehicles and essentially all
of them beyond $60$. The mean peaks at $150$\,ms at $30$ vehicles, where
the fleet has taken over but is not yet abundant, then falls
monotonically, crossing the cloud's flat ${\sim}70$\,ms floor between $60$ and
$100$ vehicles and reaching $40$\,ms at $200$; the vehicle-only curve
falls further, to about $30$\,ms. The cloud floor is a pure round trip,
$2\times$ ($35$\,ms core latency plus fiber propagation), cloud service
and queueing being negligible, while \textsf{Edge-only} sits at
$\sim$$10$\,ms throughout, the co-located node paying only the radio
link. \textsf{Greedy} lies on top of \textsf{DRO} at every density, as
Table~\ref{tab:failure_utility} anticipates for an honest fleet, and the
confidence band of \textsf{DRO} is widest at $10$--$30$ vehicles, where the identity of the vehicles in coverage matters most.

Table~\ref{tab:failure_utility} compares the failure rate of the
strategies, a failure being a task not completed within its firm
deadline. The three exact policies coincide within confidence intervals under
honest reporting, falling from $48\%$ at $30$ vehicles to $13\%$ at
$200$; the dip to
$41\%$ at $10$ reflects the cloud absorbing over $40\%$ of the load.
\textsf{Cloud-only} is pinned at $33\%$ at every density: exactly the
$16$\,ms deadline tier of Table~\ref{tab:parameters_sim_last},
unreachable over a ${\sim}70$\,ms round trip. \textsf{Vehicles-first} is
worst at low density ($62\%$ at $30$ vehicles) yet best among VCC at high
density ($5\%$ at $200$): with abundant vehicles, aggressive spreading
beats utility-maximizing concentration on pure QoS, at several times the
wasted cost. \textsf{Edge-only} stays below $3\%$ everywhere. The
residual $13\%$ at $200$ vehicles is the upper bound of the current
architecture under maximum load; a signaling suppression mechanism, in
which vehicles ranked below a density-dependent threshold refrain from
broadcasting offloading-specific beacons, is left to future work.

\subsection{Energy Analysis}
\label{sec: energy-results-last}

Fig.~\ref{fig: energy_last} compares per-task energy as a function of
vehicle density. Vehicular execution costs $60$--$70$\,mJ, with a mild
bump at $30$ vehicles where the fleet has just taken over the load, and
settles near $62$\,mJ from $60$ vehicles on; the cloud path costs
${\sim}118$\,mJ, almost twice as much, and drifts slightly upward with
density as radio contention lengthens the uplink. Metered per
operation~\eqref{eq: energy_node_last} the compute terms are comparable,
and the difference is the Internet leg, whose per-bit
intensity~\cite{Aslan2018} dominates. Keeping execution local therefore
halves per-task energy and relieves core-network congestion, benefiting
an operator that relies on caching or traffic shaping to manage load; a
market analysis by iGR for Mavenir estimates that U.S. mobile operators
could save over \$546 million in five years by offloading up to $32\%$
of video traffic via edge computing instead of the core
network~\cite{igr_mec_savings}. The bottom panel converts energy into
cost through national electricity prices: between about $1$\,n\$ per
task in China and $7$\,n\$ in the EU, with France and the USA in
between, and decreasing with density because at $10$ vehicles close to
half of the tasks pay the cloud path. These are three orders of magnitude
below the payments of Table~\ref{tab:parameters_sim_last}, which is why
energy never drives the economics.

\textbf{Impact on EV battery consumption.} From the vehicle owner's
perspective, a key question is whether offloading meaningfully affects
overall energy consumption. For a Tesla Model~S
($17.5$\,kWh/100\,km~\cite{tesla_power_consumption}) one offloading of
$\approx$$60$\,mJ is $\sim$$10^{-7}\%$ of the energy spent over
$100$\,km. 

\subsection{Profitability: Utility and Incentive Distribution}
\label{sec: revenues-last}

Table~\ref{tab:failure_utility} reports the realized utility per run
versus density, payments net of energy costs and of the tasks that miss
their deadline at execution. The aggregate is \emph{not} monotonic in
density: it falls from $20.1$ to $18.1{\times}10^{3}\,\mu\$$ between $10$
and $30$ vehicles, then grows to $36.1$ at $200$. The minimum sits at the
density at which completion time peaks (Fig.~\ref{fig: rate_off_time})
and the failure rate is highest, $48\%$: at $30$ vehicles the fleet has
taken the load from the cloud without being abundant enough to serve it,
so tasks are admitted, occupy an executor and then miss their deadline,
earning nothing. Below that point the cloud still carries the system: its
share is the majority only at $10$ vehicles, $11.5$ against $8.6$ for the
fleet, and collapses to $0.7$ at $30$ and to zero from $60$ on, as soon
as the fleet is dense enough for the allocator to stop paying the cloud's
$70$\,ms floor. Where vehicles are scarce
the cloud is the profitable executor despite its latency, because it
completes what it accepts; where they are dense the fleet dominates
because it reaches the $16$\,ms tier, which carries the highest payment
and is out of the cloud's reach, receiving $0.6\%$ of its assignments at
$10$ vehicles, $26.1\%$ at $60$ and $29.4\%$ at $200$ against $33\%$
offered. The confidence interval is widest at $60$ vehicles ($\pm3.9$ on
$25.2$), the transition point between the two regimes.

\begin{table}[tbp]
    \centering
    \resizebox{\columnwidth}{!}{%
    \begin{tabular}{l l r r r r r r}
        \toprule
        & & \multicolumn{6}{c}{\textbf{Vehicle density}} \\
        \cmidrule(lr){3-8}
        \textbf{Horizon} & \textbf{Metric} &
        \textbf{10} & \textbf{30} & \textbf{60} &
        \textbf{100} & \textbf{150} & \textbf{200} \\
        \midrule
        \multirow{3}{*}{Monthly (30 d)} &
        Revenue [\$] & 4.13 & 2.78 & 2.02 & 1.57 & 1.13 & 0.87 \\
        & Charging [h] & 3.93 & 2.65 & 1.92 & 1.50 & 1.08 & 0.83 \\
        & Range [km] & 157 & 106 & 77 & 60 & 43 & 33 \\
        \midrule
        \multirow{3}{*}{Yearly (365 d)} &
        Revenue [\$] & 50.2 & 33.9 & 24.5 & 19.1 & 13.8 & 10.6 \\
        & Charging [h] & 47.8 & 32.3 & 23.4 & 18.2 & 13.1 & 10.1 \\
        & Range [km] & 1913 & 1290 & 934 & 729 & 524 & 402 \\
        \midrule
        \multirow{3}{*}{Lifetime (18.4 y)} &
        Revenue [\$] & 924 & 623 & 451 & 352 & 253 & 194 \\
        & Charging [h] & 880 & 594 & 430 & 335 & 241 & 185 \\
        & Range [km] & 35197 & 23741 & 17192 & 13407 & 9636 & 7394 \\
        \bottomrule
    \end{tabular}
    }
    \caption{Net revenues per vehicle, equivalent EV charging hours on a
    7~kW home wallbox, and corresponding driving range for a Tesla
    Model~S, derived from the vehicular utility share of
    Table~\ref{tab:failure_utility} at $160$ runs of $30$\,s per day
    (1h20 offloading/day, vehicle consumption 17.5~kWh/100~km). Recharge
    energy is priced at 0.15~\$/kWh, the retail charging price, distinct
    from the 0.21~\$/kWh of the operator cost model of
    Eq.~\eqref{eq: utility_last}.}
    \label{tab:revenues_and_charging}
\end{table}

\textbf{Incentives to car owners.} A fundamental question in vehicular
task offloading is what would motivate individuals to accept external
tasks on their vehicles. Table~\ref{tab:revenues_and_charging} answers it with three quantities: the revenue a single VO earns on average,
the equivalent recharge time, and the driving range it buys. The
computations assume $1$~h and $20$~min of daily driving in the covered
area~\cite{driving_time} and a vehicle lifetime of $18.4$
years~\cite{Nguyen-Tien2025}. The measured aggregates yield $\$4.13$ to
$\$0.87$ per vehicle-month from $10$ to $200$ vehicles, i.e.\ $157$ to
$33$\,km of monthly range and $\$924$ to $\$194$ ($35\,200$ to
$7\,400$\,km) over the lifetime. Per-vehicle revenues decrease as the
fleet grows, following the law of supply and demand: more vehicles imply
more competition for the same tasks, and the aggregate they share grows
more slowly than their number.

\begin{table}[t]
  \centering\footnotesize\setlength{\tabcolsep}{5pt}
  \begin{tabular}{lrrrr}
    \toprule
    Load [\% of peak] & 100 & 50 & 10 & 1 \\
    \midrule
    Edge A30 (330 TOPS, baseline)  & 0.33 & 0.65 & 3.22 & 32.1 \\
    Edge A100 (624 TOPS) & 0.26 & 0.51 & 2.50 & 24.9 \\
    \textbf{VCC (zero capital)} & \multicolumn{4}{c}{\textbf{0.004 at every load}} \\
    \midrule
    A30 / VCC ratio & $84\times$ & $167\times$ & $826\times$ & $8241\times$ \\
    \bottomrule
  \end{tabular}
    \caption{Cost per served task [$\mu\$$] vs.\ sustained load: the edge
  amortizes its 3-year total cost of ownership over the tasks it serves;
  VCC amortizes nothing.}
  \label{tab:cost_per_task}
\end{table}

\textbf{Cost per task against edge infrastructure.} At full demand
\textsf{Edge-only} attains the highest net utility, serving every task
and collecting every payment even after its amortized cost. The decisive
quantity for an operator is instead the cost per served task
(Table~\ref{tab:cost_per_task}): the edge amortizes a fixed three-year
total cost of ownership over whatever demand materializes, so its
per-task cost is $84\times$ VCC's at full utilization and grows inversely
with load, to three orders of magnitude at the off-peak loads a
peak-provisioned server faces most of the day. VCC's curve is flat,
having no capital to recover, vehicular compute being sunk to the driving
mission: a server pays off only under high, certain, sustained
demand.

\subsection{Robustness to Capacity Misreporting}
\label{sec:robustness-results}

The results so far assume vehicles that report their capacity
truthfully; we now remove this assumption. A fraction $\psi$ of the
vehicles over-states its spare capacity in the Stage-1 beacon by a factor
$0.6$, so as to win assignments it cannot honor. \textbf{Late failures} are tasks admitted in Stage~1 that miss
their deadline at execution: the executor has been consumed and the
end-user learns of the failure only once the deadline has elapsed,
whereas a Stage-1 rejection is immediate and leaves the Controller free
to re-offer the task within its deadline. \textbf{Net utility} is
payments minus operating costs over the run. Table~\ref{tab:divergence}
reports both at two spare-capacity fractions.

\begin{table}[t]
  \centering\scriptsize\setlength{\tabcolsep}{2pt}
  \resizebox{\columnwidth}{!}{%
  \begin{tabular}{llccccc}
    \toprule
    & $\psi$ & 0.0 & 0.2 & 0.4 & 0.6 & 0.8 \\
    \midrule
    \multicolumn{7}{l}{\emph{Late failures} [\% of offered load]}\\
    \multirow{2}{*}{$8\%$}
      & \textsf{DRO}    & 3.7 ($\pm$1.5) & 4.7 ($\pm$1.7) & 4.6 ($\pm$1.8) & \textbf{5.3 ($\pm$1.7)} & 6.0 ($\pm$3.3) \\
      & \textsf{no-DRO} & 3.7 ($\pm$1.5) & 20.4 ($\pm$7.0) & 24.4 ($\pm$6.6) & \textbf{31.1 ($\pm$10.0)} & 24.8 ($\pm$8.8) \\
    \multirow{2}{*}{$3\%$}
      & \textsf{DRO}    & 5.2 ($\pm$1.2) & 4.2 ($\pm$1.2) & 3.2 ($\pm$0.9) & 2.3 ($\pm$1.0) & \textbf{1.5 ($\pm$1.0)} \\
      & \textsf{no-DRO} & 5.2 ($\pm$1.2) & 16.6 ($\pm$5.2) & 23.4 ($\pm$4.9) & 33.3 ($\pm$4.6) & \textbf{39.9 ($\pm$7.1)} \\
    \midrule
    \multicolumn{7}{l}{\emph{Net utility} [$10^{3}\,\mu\$$]}\\
    \multirow{2}{*}{$8\%$}
      & \textsf{DRO}    & 27.6 ($\pm$5.0) & 23.4 ($\pm$1.8) & 22.1 ($\pm$1.4) & 21.1 ($\pm$0.9) & 20.6 ($\pm$1.1) \\
      & \textsf{no-DRO} & 27.6 ($\pm$5.0) & 22.9 ($\pm$4.3) & 21.8 ($\pm$4.3) & 21.7 ($\pm$6.0) & 25.7 ($\pm$4.4) \\
    \multirow{2}{*}{$3\%$}
      & \textsf{DRO}    & 20.7 ($\pm$0.4) & 21.1 ($\pm$0.4) & 21.4 ($\pm$0.3) & 21.7 ($\pm$0.4) & \textbf{22.0 ($\pm$0.3)} \\
      & \textsf{no-DRO} & 20.7 ($\pm$0.4) & 17.0 ($\pm$1.7) & 14.7 ($\pm$1.6) & 11.4 ($\pm$1.6) & \textbf{9.2 ($\pm$2.4)} \\
    \bottomrule
  \end{tabular}}
  \caption{Late failures (\% of the offered load, $\approx 3.0\times10^{4}$
  tasks per run) and net utility [$10^{3}\,\mu\$$] versus the misreporting
  fraction $\psi$, at two spare-capacity fractions; mean ($\pm$~half-width
  of the $95\%$ Student-$t$ CI) over $10$ seeds. \textsf{no-DRO} and
  \textsf{Greedy} coincide exactly throughout, so only \textsf{no-DRO} is
  shown.}
  \label{tab:divergence}
\end{table}

\begin{figure}[t]
  \centering
  \includegraphics[width=\linewidth]{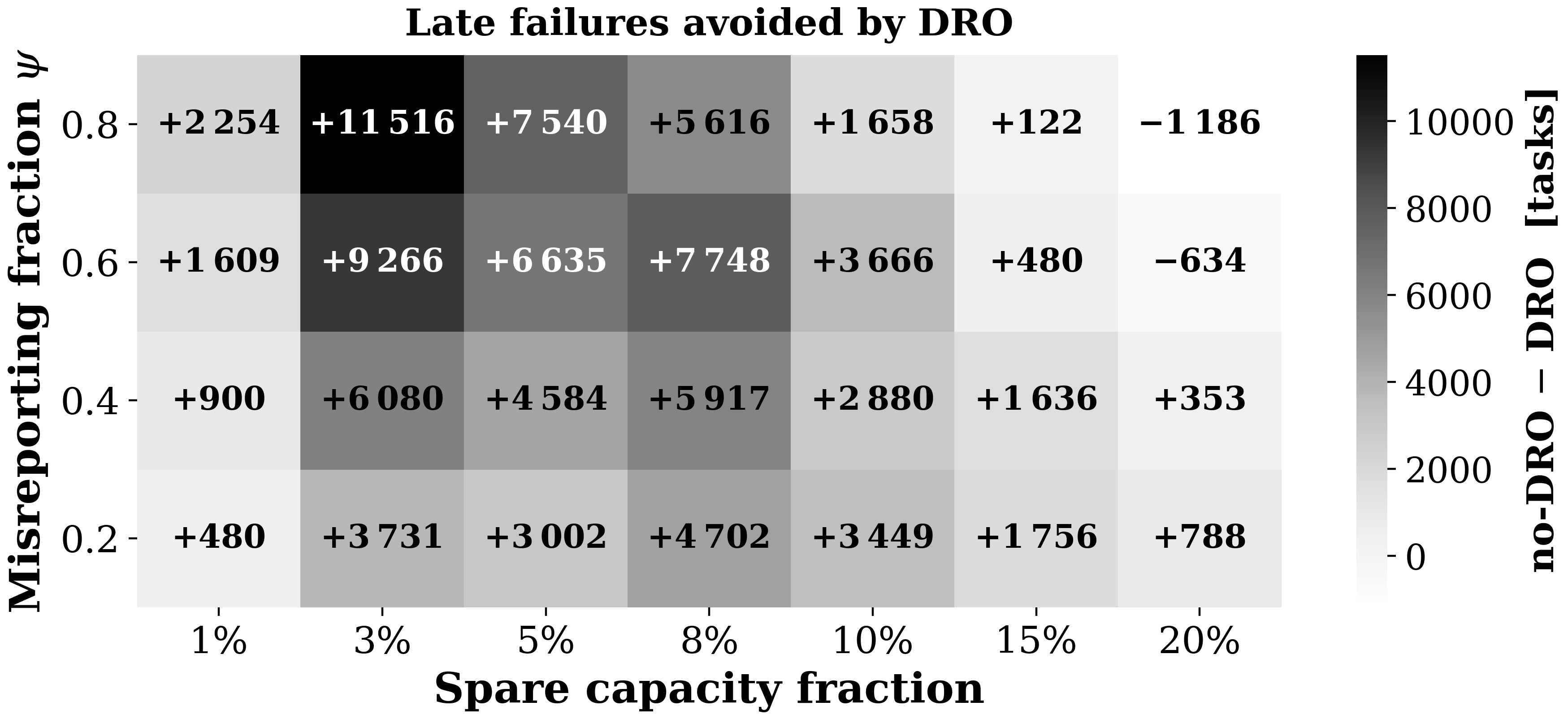}
  \caption{Late failures avoided by \textsf{DRO} (\textsf{no-DRO} minus
  \textsf{DRO}) over the spare capacity and misreporting grid. The honest
  row $\psi=0$ is omitted: with no over-declaration $\lambda_W d^{i}_W$
  vanishes and the two policies solve the same instance, differing only
  by solver tie-breaking (below $0.5\%$ on a single seed).}
  \label{fig:mm1_grid}
\end{figure}

Three facts follow from Table~\ref{tab:divergence}. First, on an honest
fleet the admission rule is inert: at $\psi=0$ the policies coincide
up to solver tie-breaking among equal-utility optima (below $0.5\%$ on a
single seed), because
$\lambda_W d_W^{i}$ vanishes when the observed over-declaration is zero
and admission collapses onto the non-robust optimum. Second, as soon as
some vehicles over-declare the policies separate: at $8\%$ capacity and
$\psi=0.6$, \textsf{DRO} loses $5.3\%$ ($\pm$1.7) of the offered load
against $31.1\%$ ($\pm$10.0), a sixfold reduction, the intervals disjoint
for every $\psi\ge0.2$. Fig.~\ref{fig:mm1_grid} maps the failures
avoided over the whole grid: the protection is largest at $3\%$ spare
capacity, where it grows monotonically with $\psi$ up to $11\,516$ tasks
at $\psi=0.8$, about $38\%$ of the offered load, and it stays in the
thousands across the $3$--$10\%$ band. The $1\%$ column is small for a
different reason: at that capacity almost no task can be served by any
vehicle, so there is little left to protect. From $15\%$ the avoided
failures fade and at $20\%$ the sign inverts, \textsf{DRO} losing
$634$ and $1\,186$ tasks more than \textsf{no-DRO} at $\psi=0.6$ and
$0.8$: in abundance an over-declaring vehicle would have completed the
task anyway, and rejecting it only concentrates the load on the remaining
executors. Third, in every cell of the campaign
\textsf{Greedy} and \textsf{no-DRO} produce identical outcomes, task
counts and utilities coinciding exactly: an optimal allocator that trusts
inflated capacities fails exactly like a greedy one, so the protection
comes from the admission rule and not from the optimality of the
allocation.

The effect peaks in scarcity: at $3\%$ the \textsf{DRO} losses
\emph{decrease} with misreporting, from $5.2\%$ to $1.5\%$ of the load,
while the baselines climb to $39.9\%$. Scarce capacity is where a
misreport is most damaging: with no slack to absorb the miss, an
admitted misreporter saturates its M/G/1 queue and the tasks it holds
are lost, whereas the rule turns away the very vehicles that would fail.
The economics follow the same regime map, with a sign change. At scarce
capacity the \textsf{DRO} net utility rises with misreporting, every
avoided late failure having otherwise consumed resources for no reward,
while the baselines collapse. At abundant capacity an over-declaring
vehicle often completes the task anyway, so rejecting it forfeits
revenue: the gain of \textsf{DRO} over \textsf{no-DRO} at $\psi=0.6$ goes
from $+10.3$ ($\pm$1.6)\,$10^{3}\mu\$$ at $3\%$ to $-0.7$ ($\pm$5.6) at
$8\%$, where the interval still contains zero, and becomes significantly
negative from $15\%$ spare capacity onward. Under pure energy-cost
accounting robust admission therefore pays in the scarce regime, the
realistic one for a fleet reserving most of its compute for driving, and
costs money in abundance. An adaptive penalty relaxing $\lambda_W$ with
measured load and misreporting was designed to recover utility in
abundance; the data do not support it, since relaxing the radius admits
exactly the vehicles that then miss their deadlines, and it is reported
as a tested and rejected variant in the supplementary material, together
with the attribution of every generated task to its fate and a
sensitivity study over adversary intensity, fleet heterogeneity and load,
in all fifteen configurations of which \textsf{DRO} reduces late failures
significantly.

\section{Carbon Emissions Analysis}
\label{sec:emissions_analysis}

\begin{table}[t]
    \centering
    \footnotesize
    \setlength{\tabcolsep}{3.5pt}
    \begin{tabular}{lrrrrrr@{\hskip 8pt}rr}
        \toprule
        & \multicolumn{6}{c}{\textbf{Annual VCC [g], vehicle density}}
        & \multicolumn{2}{c}{\textbf{5-yr (d.\ 60)}}\\
        \cmidrule(lr){2-7}\cmidrule(lr){8-9}
        \textbf{Region} &
        \textbf{10} & \textbf{30} & \textbf{60} &
        \textbf{100} & \textbf{150} & \textbf{200} &
        \textbf{Edge [kg]} & \textbf{Sav.}\\
        \midrule
        France & 10.1 & 8.3 & 6.2 & 5.7 & 5.8 & 5.8 & 1182 & 99.99\% \\
        EU-27  & 125 & 102 & 77 & 71 & 71 & 72 & 4095 & 99.99\% \\
        USA    & 191 & 156 & 117 & 108 & 108 & 109 & 5759 & 99.99\% \\
        China  & 300 & 246 & 185 & 170 & 171 & 172 & 8536 & 99.99\% \\
        \bottomrule
    \end{tabular}
    \caption{Left: annual CO$_2$ per vehicle (g): mean per-task energy
    of Fig.~\ref{fig: energy_last} times $4\,800$\,s/day of offloading,
    $365$ days, $10$~tasks/s and the regional grid intensity. Right:
    five-year footprint of an edge server, computed as
    $925$\,kg embodied plus $13\,100$\,kWh times the regional intensity,
    and VCC saving at density~60 under full attribution of the server to
    offloading.}
    \label{tab:co2_yearly}\label{tab:vcc_savings}
\end{table}

\subsection{Estimated Carbon Emissions per Vehicle}

Carbon emissions per vehicle are estimated by scaling the average energy
used per task by the tasks a vehicle serves at $10$~tasks/s over a daily
offloading duration of 1~h and 20~min, and aggregating the result
annually. Emissions are then computed using
national grid carbon intensity, which expresses CO$_2$ emissions per
consumed kWh. The carbon intensity of electricity production
varies across countries: \textbf{France} emits $19.6$\,g\,CO$_2$/kWh,
primarily due to its energy mix (5\% fossil, 25\% renewables, 70\%
nuclear)~\cite{RTE2025}; the \textbf{EU average} is
$242$\,g\,CO$_2$/kWh (37\% fossil, 39\% renewables, 24\%
nuclear)~\cite{EEA2025}; the \textbf{United States} reaches
$369$\,g\,CO$_2$/kWh (39\% fossil, 39\% renewables, 22\%
nuclear)~\cite{EPAeGRID2025}; and \textbf{China} peaks at
$581$\,g\,CO$_2$/kWh (65\% fossil, 31\% renewables, 4\%
nuclear)~\cite{IEA2024}. As reflected in Table~\ref{tab:co2_yearly}, the
environmental cost of vehicular cloud computing ranges from $6$ to
$10$\,g\,CO$_2$ per vehicle and year in France to $170$--$300$\,g in
China. Emissions decrease with density: at $10$ vehicles close to half of
the tasks travel the cloud path at ${\sim}118$\,mJ, while from $60$
vehicles on the fleet serves them locally at ${\sim}62$\,mJ
(Fig.~\ref{fig: energy_last}), so a denser fleet is also a cleaner one.

\subsection{Vehicular vs.\ Edge Computing: Carbon Emissions}
\label{sec:lca-edge-vcc}

We contrast the greenhouse gas footprint of an Edge server with
that of a vehicle used for computation offloading. Following the
ISO~14040 life-cycle assessment over a five-year functional
unit~\cite{boavizta2021}, we account for production, operation, and
end-of-life phases. The \textbf{Edge server} is a 1U dual-CPU node with
256\,GB of RAM and two SSDs, drawing $300$\,W on average, for
approximately $13\,100$\,kWh over five years; its manufacturing emissions
are estimated between $900$ and $1\,250$\,kg\,CO$_2$ and end-of-life
disposal provides a recycling credit between $50$ and $100$\,kg, giving a
net embodied term of $925$\,kg taken at the midpoint and held constant
across regions. The Edge column of Table~\ref{tab:vcc_savings} is this
embodied term plus the operational energy at the regional intensity. In
contrast, a \textbf{VCC vehicle} reuses existing hardware and operates
for 1h20 per day, with annual emissions of a few hundred grams at most
across all densities. Besides emissions, VCC avoids the material
demand associated with Edge server production: each saved server
eliminates the need for about $12$\,kg of steel, $1.5$\,kg of aluminum,
$1$\,kg of copper, and trace amounts of silicon, tantalum, neodymium,
gold and silver, whose extraction and processing entail severe
environmental and geopolitical costs. Even in carbon-intensive
environments, using VCC instead of deploying edge servers cuts
life-cycle emissions by over $99\%$, up to $8.5$\,t\,CO$_2$ per vehicle
over five years.

\subsection{Peak-Load Attribution}
\label{sec:peak}
The Edge column of Table~\ref{tab:vcc_savings} charges the entire
embodied and operational footprint of a newly provisioned server to
offloading, an upper bound. In practice the same server is not consumed
by offloading alone. Since infrastructure is sized for the peak demand it
must absorb, the attribution consistent with that sizing is pro rata to
the peak: we charge to offloading the fraction $f_{\text{peak}}$ of the
footprint equal to the peak compute share that offloading occupies on the
device, measured on the raw traces of the campaign; the reproducibility
script accompanying the code emits Table~\ref{tab:co2_yearly} for any
measured $f_{\text{peak}}$. This scales the Edge column and the savings
accordingly; because vehicular embodied emissions are zero by
construction and the VCC operational term is three orders of magnitude
smaller, the peak-share comparison lands in the same direction with a
deployment-specific margin. Against a shared multi-access edge computing
server already deployed for other purposes the same reasoning applies:
only the marginal, peak-attributed footprint should be allocated, and
the advantage narrows without inverting, shared servers still needing
power, cooling and end-of-life accounting that scales with workload.

\section{Conclusion}
\label{sec: conclusion-last}

This paper asked whether reusing the spare computing capacity of vehicles
is viable, profitable and sustainable.

\textit{On viability}, vehicles absorb most of the offloaded traffic while keeping
delay within tens of milliseconds, satisfying applications with stringent
latency constraints, and the queueing term makes the load already carried
by an executor visible to the allocation decision.

\textit{On profitability}, the economic incentives more than offset participation
costs even under high competition, yielding between $157$ and $33$\,km
of monthly driving range per vehicle, $\$4.13$ to $\$0.87$ per month,
depending on density, and $\$924$ to $\$194$ ($35\,200$ to
$7\,400$\,km) over the vehicle's lifetime at 1h20 of daily
driving in the covered area. The realized aggregate is not monotonic in
density, being lowest at $30$ vehicles, where the fleet has taken the
load from the cloud without being able to serve it. The revenue-sharing rule keeps the imputation in the core of the
game at every slot, and the admission rule protects the deadline
guarantee against over-stated declarations without altering the outcome
on an honest fleet, cutting deadline misses sixfold once vehicles
over-declare; the protection comes from admission, an optimal allocator
that trusts inflated capacities failing exactly like a greedy one. Beyond
roughly $8\%$ spare capacity the rule costs money under pure energy-cost
accounting, and an adaptive variant designed to recover that loss is not
supported by the data.

\textit{On sustainability}, the added emissions are small: a vehicle emits
between $6$ and $300$\,g\,CO\textsubscript{2} per year depending on
grid carbon intensity and fleet density, while replacing an edge server saves up
to $8.5$\,t\,CO\textsubscript{2} over its life cycle and reduces the
demand for rare elements.

The natural
next steps are a prototype and a field validation, multi-cell
handover-aware allocation, signaling suppression at high density, and
trust mechanisms layered on the ex-post verification the scheme already
performs.

\section*{Acknowledgement}
This work was funded by the French government under the France 2030 ANR program “PEPR Networks of the Future” (ref. ANR-22-PEFT-0003).

\section*{Use of AI-generated content.}
The simulator and the allocation and revenue-sharing code were written by
the authors, who also ran the simulation campaign. A large language model was used during the preparation of
this manuscript to edit and restructure text drafted by the authors, and to
correct and optimise the scripts. Every script was reviewed and executed by
the authors, and every sentence, equation and number in the paper was
verified by them. The schematic in
Fig.~1(a) was drawn by hand (digitally) and its resolution was increased using an
AI-based image upscaling tool; no element of the drawing was AI-generated.

\bibliographystyle{IEEEtran}
\bibliography{refs}

\end{document}